\newcommand{\real}{{\mathbb R}}
\newcommand{\natl}{{\mathbb N}}
\newcommand\sO{\mathcal{O}}
\newcommand\sU{\mathcal{U}}
\newcommand\sE{\mathcal{E}}
\newcommand\sY{\mathcal{Y}}
\newcommand{\bE}{{\mathbb{E}}}
\newcommand{\integ}{{\mathbb Z}}
\newcommand{\posint}{{\mathbb Z}_+}
\newtheorem{theorem}{Theorem}
\newtheorem{lemma}[theorem]{Lemma}
\newcommand{\qed}{\mbox{\rule{.4em}{1.7ex}\hspace{.6em}}}
\newenvironment{proof}{{\bf Proof\ }}{\hspace*{.1em}\hfill\qed
\bigskip \noindent}
\begin{document}

%\preprint{AIP/123-QED}

\title{State and parameter estimation from exact partial state observation in stochastic reaction networks}
%[State and parameter estimation]

\author[1]{Muruhan Rathinam}
\author[2]{Mingkai Yu}
\affil[1]{Department of Mathematics and Statistics, University of Maryland Baltimore County}
\affil[2]{Department of Mathematics and Statistics, University of Maryland Baltimore County}

\date{\today}% It is always \today, today,
             %  but any date may be explicitly specified

\maketitle

\begin{abstract}
We consider chemical reaction networks modeled by a discrete state and continuous in time Markov process for the vector copy number of the species and provide a novel particle filter method for state and parameter estimation based on exact observation of some of the species in continuous time. The conditional probability distribution of the unobserved states is shown to satisfy a system of differential equations with jumps. 
We provide a method of simulating a process that is a proxy for the vector copy number of the unobserved species along with a weight. The resulting weighted Monte Carlo simulation is then used to compute the conditional probability distribution of the unobserved species. We also show how our algorithm can be adapted for a Bayesian estimation of parameters and for the estimation of a past state value based on observations up to a future time. 
\end{abstract}

\section{Introduction}
Chemical reaction networks occurring at the intracellular level
often have some or all molecular species present in low copy numbers.
As such, these are best modeled by a discrete state Markov process in
continuous time, where the state $Z(t) \in \posint^n$ of the process is the vector copy number
of the molecular species at time $t$.
This model and the corresponding stochastic simulation algorithm introduced by
Gillespie \cite{Gillespie77,gillespie1976general} assumed a well-stirred system where spatial
variations in molecular concentrations were considered negligible.
Later on when the spatial variations were considered important, this model
still proved useful in that its framework allowed for a spatial
compartmental model where transport of molecules across compartments could be
treated mathematically as reactions \cite{spatialSSA}.

In this paper, we consider the problem of estimating the state $Z(t)$
from exact and continuous observations of some of the states in time.
In other words, suppose that the reaction network consists of $n$ molecular species
where the vector copy number of $n_2$ of the molecular species may be
observed exactly as a function of time over the interval $[0,t]$. Based on this observation, we are
interested in estimating the vector copy number of the remaining $n_1=n-n_2$
species at time $t$. If we denote the vector copy number of the observable species by
$Y(t) \in \posint^{n_2}$ and that of the rest of the (unobservable) species by $X(t) \in
\posint^{n_1}$ (so that $Z(t)=(X(t),Y(t))$), then our goal is to compute the conditional probability
distribution 
\[
\pi(t,x) = P\{X(t)=x \, | \, Y(s)=y(s), \; 0 \leq s \leq t \}
\]
for $x \in \posint^{n_1}$.

{\em Stochastic filtering} methods provide a framework for addressing this type of problem, which is not restricted to reaction networks. The stochastic filtering methods usually involve generating recursive
updates in time so that the additional computation required in going from the knowledge of conditional probability at
a time instant $t$ to a future time instant $t+h$ only involves the
observations made during $(t,t+h]$ and the conditional probability computed
by time $t$.   
The widely known instance of stochastic filtering is the Kalman filter  \cite{kalman1960new}
which was concerned with a Gaussian process with a linear evolution
in discrete time. Later on, the Kalman-Bucy filter \cite{kalman1961new}
was introduced in the continuous time setting of linear stochastic
differential equations (SDEs). In these linear and Gaussian settings the conditional
probability distribution is also Gaussian, and hence its time evolution can be
reduced to studying the time evolution of the mean and the covariance.

In general, one may not expect the conditional probability to be Gaussian
and hence the methods are more complex. In the general
nonlinear setting for SDEs there are several methods and one may find \cite{bain2008fundamentals} as a general reference. 
In the setting of Markov processes in discrete time, one may find 
recursive filtering equations which describe the time evolution
of the conditional probability distribution \cite{fristedt2007filtering}.

For discrete and finite state Markov processes in continuous time, a rigorous
derivation of the evolution equation for the conditional probability distribution $\pi(t,x)$ 
was provided in \cite{confortola2013filtering}. These evolution equations
take the form of differential equations with jumps. 
These equations, known as
the {\em filtering equations}, are analogous to {\em Kolmogorov's forward equations} 
for the time evolution of the (unconditional) probability distribution
$P\{Z(t)=z\}$ of a continuous time and discrete state Markov process. In the case of reaction networks, Kolmogorov's forward equations are known as the {\em Chemical Master Equations} (CME). 

It is well known that when the number of states is infinite
or very large, the solution of the CME is not practical and Monte Carlo 
simulation is the more efficient method of choice. The well known Gillespie
algorithm \cite{Gillespie77, gillespie2007stochastic} and its variants\cite{gibson2000efficient,anderson2007modified} are the methods of choice for
exact realizations of sample paths. Likewise, when the state space for $X(t)$
is very large or infinite, direct numerical solution of the filtering
equations is not practical. 

As we shall describe in Section \ref{sec-filtering-eq}, the evolution equation for $\pi(t,x)$ is nonlinear, and hence 
it is not possible to regard $\pi(t,x)$ as the (unconditional) probability distribution of a Markov process. We shall define a nonnegative function $\rho(t,x)$ 
which satisfies a linear evolution equation and when normalized yields $\pi(t,x)$: 
\[
\pi(t,x) = \frac{\rho(t,x)}{\sum_{\tilde{x}} \rho(t,\tilde{x})}.
\]
We refer to $\rho(t,x)$ as the {\em unnormalized conditional distribution} and refer to the evolution equations \eqref{eq-rho-deriv} and \eqref{eq-rho-jump} for $\rho(t,x)$, that are defined in Section \ref{sec-filtering-eq},
as the {\em unnormalized filtering equations}. 
We shall show that
$\rho(t,x)$ equals the expected value of a function of a suitably defined 
Markov process $(V(t),w(t))$:
\[
\rho(t,x) = \bE\left(1_{\{x\}}(V(t))\, w(t) \right),
\]
and thereby enabling an unbiased Monte Carlo estimate of $\rho(t,x)$. 
Here, $V$ has $\posint^{n_1}$ as its state space (thus $V$ has the same state space as the unobserved species $X$) and $w$ is a nonnegative weight.

To our best knowledge, in the context of discrete state and continuous time Markov chains, a Monte Carlo algorithm for the computation of $\pi(t,x)$ based on an exact partial state trajectory $y(t)$ observed in continuous time, is not available in the literature. In this paper, we provide such a weighted Monte Carlo algorithm, also referred to as a {\em particle filter}, that is tailored to chemical reaction networks. Our algorithm provides 
an unbiased estimate of $\rho(t,x)$, the unnormalized 
conditional distribution. 
%What is meant by exact needs to be explained. 
The algorithm simulates $N_s$ identically distributed copies of $(V,w)$ 
%where $V$ has $\posint^{n_1}$ as its state space and $w$ is a weight 
such that $\rho(t,x)$ is estimated by
\[
\hat{\rho}(t,x) = \frac{1}{N_s} \, \sum_{i=1}^{N_s} 1_{\{x\}}(V^{(i)}(t))\, w^{(i)}(t).
\]
Thus, $\pi(t,x)$ is estimated by 
\[
\hat{\pi}(t,x) = \frac{\sum_{i=1}^{N_s} 1_{\{x\}}(V^{(i)}(t))\, w^{(i)}(t)}{\sum_{i=1}^{N_s} w^{(i)}(t)}.
\]
%The term exact refers to the fact that the numerator and hence the denominator are unbiased estimators of $\rho(t,x)$ and $\sum_{x}\rho(t,x)$ respectively, where $\rho(t,x)$ 
%satisfies the {\em unnormalized filtering equations} \eqref{eq-rho-deriv} and \eqref{eq-rho-jump} that will  
%be defined in Section \ref{sec-filtering-eq} and are related to $\pi(t,x)$ by
%\[
%\pi(t,x) = \frac{\rho(t,x)}{\sum_{\tilde{x}}\rho(t,\tilde{%x})}.
%\] 
Due to the division, $\hat{\pi}(t,x)$ is not an unbiased estimator of $\pi(t,x)$. 
%However, by virtue of law of large numbers, as $N_s \to \infty$ we expect $\hat{\pi}(t,x)$ to converge to $\pi(t,x)$. 
In fact, Monte Carlo methods of filtering in most other contexts (discrete time Markov chains, nonlinear SDEs etc.) also obtain the conditional probabilities by estimating an unnormalized version and then normalizing. 

We describe some previous results for state and parameter estimation in the context of reaction networks. All these methods are for observations based on discrete time snapshots while our work addresses the case of continuous time observation.  
\cite{boys2008bayesian} considers Bayesian inference of parameters based on exact partial state observations in discrete time snapshots and proposes
a Markov Chain Monte Carlo (MCMC) method. \cite{golightly2006bayesian, golightly2011bayesian} use the {\em Chemical Langevin Equation} (CLE) as the underlying model and propose MCMC methods for state and parameter estimation based on partial state observations corrupted by additive Gaussian noise in discrete time snapshots. 
\cite{calderazzo2019filtering} considers the case where the reaction network is approximated by the {\em linear noise approximation}. The observations 
are assumed to be linear combination of the states corrupted by an additive Gaussian noise with an unknown variance and they propose an extended Kalman-Bucy filter. They also consider pure delays in the system. The recent work in  \cite{fang2020stochastic} considers the case where the reaction network may be approximated by an ODE model with jumps. A function of the state corrupted by an additive Gaussian noise is observed in discrete time snapshots. 
Rigorous results are derived to show that by solving the filtering problem for the reduced model based on observations of the original model, one can also obtain a good approximation of the estimation of the original model.   

% --------------------------------
The weighted Monte Carlo method provided in this paper is concerned with the computation of
the conditional distribution of the unobserved species 
based on exact observation of some species in continuous time. 
In reality, perfect observations are not possible. However, it is often reasonable to assume that the observation noise is discrete in nature and hence the noise could be incorporated into the model via the introduction of additional species and reactions. For instance, during fluorescence intensity observations, photons produced constitute a species and their production by a fluorescently tagged molecule constitutes a reaction channel. The assumption of continuous in time observation is less realistic. Nevertheless, if the sampling rate is high compared to reaction rates, then we expect this to provide a good approximation. Moreover, having an algorithm for the idealised baseline case of continuous in time observations is expected to provide valuable insights into how to analyse and develop techniques for the case of discrete time observations of a continuous time process. 

The rest of the paper is organized as follows. In Section
\ref{sec-filtering-eq} we briefly review the basics of stochastic reaction
networks and provide the filtering equations. We also introduce the useful
{\em unnormalized filtering equations}. In Section \ref{sec-monte-carlo} we
provide a Monte Carlo filtering algorithm which generates a pair of processes $(V,w)$ 
where $V$ has the same state space as $X$ such that 
\[
\pi(t,x) = \frac{\bE[1_{\{x\}}(V(t))\, w(t)]}{\bE[w(t)]}.
\]
We also 
show how our algorithm can be easily adapted to do parameter estimation
based on exact partial state observation in a Bayesian framework and also how to estimate 
the conditional probability of a past event based on 
the observations made until a later time. That is, we show how to compute  
\[
P\{X(t)=x \, | \, Y(s)=y(s), \, 0 \leq s \leq T \},
\]
where $x \in \posint^{n_1}$ and $0<t<T$, via an easy modification of our algorithm.
In Section \ref{sec-examples} we illustrate our algorithms via numerical examples and Section \ref{sec-conclusions} makes some concluding remarks and discusses future work.

\section{Stochastic reaction networks and filtering equations}\label{sec-filtering-eq} 
%\section{Evolution equations for the conditional probability}

We consider a stochastic reaction network with $n$ molecular species and $m$ reaction
channels. The process $Z(t) \in \posint^n$ stands for the copy number vector
of the species at time $t$. The dynamics of the reaction network are
characterized by {\em propensity functions}
$a_j(z,c)$ where $z \in \posint^n$ is the state and $c \in \real^p$ is a vector of parameters, and also by the stoichiometric vectors $\nu_j \in \integ^n$ for $j=1,\dots,m$.
The occurrence of a reaction event $j$ at time $t$ leads to a state change
$Z(t)=Z(t-) + \nu_j$ and given $Z(t)=z$, the probability that a reaction event
$j$ occurs during $(t,t+h]$ is $a_j(z,c) h + o(h)$ as $h \to 0+$.
We shall adopt the convention that the process $Z(t)$ is right continuous
and assume $Z(t)$ to be non-explosive, meaning that there are only finitely
many reaction events during any finite time interval.  
For brevity, we suppress the dependence on parameters except when we
are concerned with parameter estimation. 
It is well known\cite{Gillespie77, gillespie2007stochastic, gillespie1976general, gillespie2007stochastic} that the time evolution of the (unconditional) probability mass function
\[
p(t,z) = P\{Z(t)=z\}
\]
is given by the {\em Kolmogorov's forward equations} also known as the {\em
  chemical master equations} (CME):
\begin{equation}
p'(t,z) = \sum_{j=1}^m a_j(z-\nu_j) \, p(t,z-\nu_j) - \sum_{j=1}^m a_j(z)
\, p(t,z).
\end{equation}

We consider the situation where we make exact (noiseless) observations of the copy number of
the last $n_2$ species in continuous time. We write $Z(t)=(X(t),Y(t))$ where
$X(t) \in \posint^{n_1}$ is the unobserved component of the state and
$Y(t) \in \posint^{n_2}$ is the observed component of the state.
Suppose we make a particular observation $Y(s)=y(s)$ for $0 \leq s \leq t$ for
$t \geq 0$.
We are interested in computing the {\em conditional probability}
\begin{equation}
  \pi(t,x) = P\{ X(t)=x \, | \, Y(s)=y(s), \, 0 \leq s \leq t \} \quad \forall x \in \posint^{n_1}.
\end{equation}
%We describe a Monte Carlo algorithm to generate a
%sample from the distribution $\pi(t,x)$.

Let us denote by $\nu'_j$ the first $n_1$ components and by $\nu''_j$ the last $n_2$ components of $\nu_j$ for
$j=1,\dots,m$ and define the subset $\sO \subset \{1,\dots,m\}$ consisting
of {\em observable reaction channels}, that is those that alter $Y(t)$. Thus $j \in \sO$ if and only if
$\nu''_j \neq 0$. We denote by $\sU$, the complement of $\sO$. Thus $\sU$
consists of the {\em unobservable reaction channels}.

We shall denote by $t_k, k=1,2,\dots$ the successive jump times of $y(t)$ and let
$t_0=0$. 
Let's define $a^{\sO}(x,y)$ and $a^{\sU}(x,y)$ by
\begin{equation}
a^{\sO}(x,y) = \sum_{j \in \sO}a_j(x,y), \quad  a^{\sU}(x,y) = \sum_{j \in
  \sU}a_j(x,y),
\end{equation}
which are respectively the total propensity of the observable and the
unobservable reactions. For each $k \in \natl$ define $\sO_k$ by
\[
\sO_k = \{j \in \sO\, | \, \nu''_j=y(t_k)-y(t_k-)\}.
\]
Thus $\sO_k$ is the subset of observable reactions
that are consistent with the observed jump $y(t_k)-y(t_k-)=y(t_k)-y(t_{k-1})$ at time $t_k$. 

The time evolution of $\pi(t,x)$ follows a system of differential equations
in between jump times $t_k$ and at the jump times $\pi(t,x)$
undergoes jumps. A rigorous derivation of the evolution equations for $\pi(t,x)$ (for finite state Markov processes) is given in \cite{confortola2013filtering} 
and we provide a more intuitive derivation in the Appendix. We summarize these equations which we call the {\em filtering
equations} here.

For $k=0,1,\dots$, and for $t_k \leq t < t_{k+1}$, $\pi$ satisfies the following system of differential equations: 
\begin{equation}\label{eq-pi-deriv}
\begin{aligned}  
\pi'(t,x) &= \sum_{j \in \sU} \pi(t,x-\nu_j')\,a_j(x-\nu_j',y(t_k)) - \sum_{j \in
  \sU} \pi(t,x)\,a_j(x,y(t_k))\\
&- \pi(t,x) \left(a^{\sO}(x,y(t_k)) -
\sum_{\tilde{x}}a^{\sO}(\tilde{x},y(t_k)) \pi(t,\tilde{x}) \right) \quad \forall x \in \posint^{n_1}
\end{aligned}
\end{equation}
and for $k=1,2,\dots$, and at times $t_k$, $\pi(t,x)$ jumps according to:
\begin{equation}\label{eq-pi-jump}
\pi(t_k,x) = \frac{\sum_{l \in \sO_k} a_l(x-\nu_l',y(t_{k-1}))\,
  \pi(t_k-,x-\nu_l')}{\sum_{\tilde{x}}\sum_{l \in \sO_k}
  a_l(\tilde{x},y(t_{k-1}))\, \pi(t_k-,\tilde{x})} \quad \forall x \in \posint^{n_1}.
\end{equation}
We note that $\pi(t,x)$ is right continuous since by our convention
the process $Z(t)$ is right continuous and as a consequence the observed
trajectory $y(t)$ is right continuous as well.

Since we are interested in the case where the state space of the system of ODEs \eqref{eq-pi-deriv} is either an
infinite or a very large finite subset of $\posint^{n_1}$, a direct solution
of this equation is not practical. Instead our goal is
a Monte Carlo simulation.
Our Monte Carlo algorithm involves generating a trajectory $V(t)$ with
same state space $\posint^{n_1}$ as $X(t)$, along
with a weight trajectory $w(t)$, 
such that at any time $t \geq 0$ and for $x \in \posint^{n_1}$ 
\begin{equation}\label{eq-pi-MonteCarlo}
\pi(t,x) = \frac{\bE [w(t) 1_{\{x\}}(V(t))]}{\bE[w(t)]}.
\end{equation}
We note that, given a set $A$, $1_A$ is the indicator function of the set.  
In practice, from a sample of $N_s$ identically distributed trajectories $V^{(i)}$ along with
weights $w^{(i)}$ for $i=1,\dots,N_s$, $\pi(t,x)$ is estimated by
\begin{equation}\label{eq-pi-hat}
\hat{\pi}(t,x) = \frac{\sum_{i=1}^{N_s} 1_{\{x\}}(V^{(i)}(t)) \,
  w^{(i)}(t)}{\sum_{i=1}^{N_s} w^{(i)}(t)}
\end{equation}

%The algebraic equation \eqref{eq-pi-jump} for the jump in $\pi$ at jump times
%$t_k$ can be realized stochastically relatively easily. This will be described
%in the next section. On the other hand, the 
%Since the
%differential equation
%\eqref{eq-pi-deriv} in between jump times $t_k$ is less straightforward.
Since \eqref{eq-pi-deriv} is not linear, 
$\pi$ cannot be interpreted as the probability mass function of
some Markov process. However, we can define a related quantity
$\rho(t,x)$ which evolves according to a linear equation and can be related to
a Markov process. We define $\rho(t,x)$ to be a nonnegative function that satisfies the {\em unnormalized
filtering equations} defined as below. 

For $k=0,1,\dots$ on the interval $t_k \leq t < t_{k+1}$, $\rho(x,t)$ satisfies 
\begin{equation}\label{eq-rho-deriv}
\begin{aligned}  
\rho'(t,x) &= \sum_{j \in \sU} \rho(t,x-\nu_j')\,a_j(x-\nu_j',y(t_k)) - \sum_{j \in
  \sU} \rho(t,x)\,a_j(x,y(t_k))\\
&- \rho(t,x)\, a^{\sO}(x,y(t_k)) \quad \forall x \in \posint^{n_1}.
\end{aligned}
\end{equation}
For $k=1,2,\dots$ at jump times $t_k$, $\rho(t,x)$ jumps according to 
\begin{equation}\label{eq-rho-jump}
\rho(t_k,x) = \frac{1}{|\sO_k|}\sum_{j \in \sO_k} a_j(x-\nu_j',y(t_{k-1})) \, \rho(t_k-,x-\nu_j') \quad x \in \posint^{n_1}.      
\end{equation}
We note that $|\sO_k|$ is the number of reaction channels in the set $\sO_k$ and this is non-zero.
Let $\bar{\rho}(t)=\sum_{\tilde{x}}\rho(t,\tilde{x})$.
%Then (assuming that the sum can be differentiated term by term)
%\begin{equation}\label{eq-rhobar-deriv}
%\bar{\rho}'(t) = - \sum_{\tilde{x}} %\rho(t,x)\,a^{\sO}(x,y(t_k)).
%\end{equation}
It is shown in Appendix \ref{append-rho-pi} that $\pi$ is given by  $\pi(t,x)=\rho(t,x)/\bar{\rho}(t)$. 
%then (assuming that the
%derivative can be swapped with the sum) it can be  straightforward
%to verify that $\pi(t,x)$ satisfies the filtering equations 
%\eqref{eq-pi-deriv} and \eqref{eq-pi-jump}.
In the next section we shall see that an appropriate stochastic interpretation
of $\rho(t,x)$ leads to the stochastic simulation algorithm to generate $V$ and $w$ such that 
\begin{equation}\label{eq-stoch-rho}
\bE\left[1_{\{x\}}(V(t)) \, w(t) \right] = \rho(t,x).   
\end{equation}

\section{Monte Carlo Algorithms}\label{sec-monte-carlo}

\subsection{Weighted Monte Carlo and Resampling}
In Section \ref{sec-Vw} we shall define a pair of processes $(V,w)$ with state space $\posint^{n_1} \times [0,\infty)$ such that \eqref{eq-stoch-rho} holds where $\rho$ satisfies the unnormalized filtering equations \eqref{eq-rho-deriv} and \eqref{eq-rho-jump}. 
The estimation of $\pi(t,x)$ is accomplished by simulating $N_s$ identically distributed copies of the pair of processes $(V,w)$ and estimating $\pi(t,x)$ via \eqref{eq-pi-hat}. 

We note that this type of procedure differs from the most common form of 
simulating a reaction network via $N_s$ i.i.d.\ trajectories, say $Z^{(i)}$ and then estimating 
$\bE[\varphi(Z(t))]$, the expectation of a function $\varphi$ of the state via the sample average
\[
\frac{1}{N_s} \sum_{i=1}^{N_s} \varphi(Z^{(i)}(t)).
\]
In this most familiar form of Monte Carlo simulation, all trajectories are weighted equally. In our situation, the Monte Carlo algorithm involves a weighted average. The weights carry information about the relative importance of trajectories of $V$.

Corresponding to the sample $(V^{(i)},w^{(i)})$ where $i=1,\dots,N_s$, we may associate the (random) empirical measure (mass) $M(t)$ at time $t$ given by 
\begin{equation}\label{eq-emp-mass}
M(t) = \frac{1}{N_s}\sum_{i=1}^{N_s} w^{(i)}(t) \, \delta_{V^{(i)}(t)},
\end{equation}
where $\delta_x$ stands for the Dirac mass or unit point mass concentrated at $x \in \posint^{n_1}$.
If we choose $(V,w)$ to satisfy \eqref{eq-stoch-rho}, then for each $x \in \posint^{n_1}$ the expected value of the empirical measure of the singleton $\{x\}$ is $\rho(t,x)$:
\begin{equation}
\bE[M(t) (\{x\}) ] = \rho(t,x).    
\end{equation}

The fact that the weights may grow (or decay) unevenly as time progresses leads to two different issues. The first issue is that the simulation may run into numerical problems where some weights become very large while others become very small, exceeding the finite precision of the computer. Especially, very large weights will become {\tt Inf} in the finite precision representation of the computer, rendering computations impractical. The second issue is more fundamental and is unrelated to the finite precision nature of computations. The accuracy of the Monte Carlo estimate of $\pi(t,x)$ depends on the variance of $w(t)$ 
as well as the variance of $1_{\{x\}}(V(t)) \, w(t)$ and also the covariance between these two terms. If the algorithm allows for a large variance in $w(t)$, naturally this would have negative implications for the accuracy of the estimate. 

In order to avoid these issues, one may {\em resample} the empirical measure at 
certain time points to produce a new empirical measure that is still the sum of $N_s$ Dirac masses but with equal weights. By resampling, we mean the creation a new weighted sample of size $N_s$ 
from an existing weighted sample of size $N_s$
that captures the information in the original sample.   
In particular, if an empirical measure $M(t)$ at time $t$ is resampled to produce another empirical measure $\tilde{M}(t)$, then we want the resampling procedure to satisfy the basic condition that 
\begin{equation}
\bE[\tilde{M}(t) (\{x\}) ] =   \bE[M(t) (\{x\}) ] \quad \forall x \in \posint^{n_1}.  
\end{equation}
In order to maintain above requirement, the resampling procedure itself introduces extra randomness which in turn can result in loss of accuracy. Thus the frequency with which resampling is performed will be an important topic of investigation. We also note that the resampling procedure introduces dependence among the sample trajectories. Thus, while $(V^{(i)},w^{(i)})$ for $i=1,\dots,N_s$ are identically distributed, the collection is not necessarily independent.

Suppose that at time $t_0>0$, we have simulated pairs $(V^{(i)}(t_0),w^{(i)}(t_0))$ for $i=1,\dots,N_s$, and we resample to produce a new set of pairs 
$(\tilde{V}^{(i)}(t_0),\tilde{w}^{(i)}(t_0))$ for $i=1,\dots,N_s$. Then we shall have that for all $i$ and $j$, 
\[
\bE[1_{\{x\}}(\tilde{V}^{(j)}(t_0)) \, \tilde{w}^{(j)}(t_0)] = \bE[1_{\{x\}}(V^{(i)}(t_0)) \, w^{(i)}(t_0)],
\]
and $\tilde{w}^{(j)}(t_0)=1$ for $j =1,\dots,N_s$.
The consequent evolution of $(\tilde{V}^{(j)}(t),\tilde{w}^{(j)}(t))$ for $j=1,\dots,N_s$ (which we shall rename $(V^{(j)},w^{(j)})$), will proceed in a conditionally independent fashion until the next resampling. 

We shall employ a particular resampling algorithm described 
in \cite{bain2008fundamentals} and ~\cite{crisan2002minimal}. This is given in Algorithm \ref{alg-offspring} and involves regarding each sample point/particle as giving birth to a number (possibly zero) of offsprings. 
It is most convenient to think of this algorithm as propagating $N_s$ particles so that at time $t$, the $i$th particle is situated at location $V^{(i)}(t) \in \posint^{n_1}$ and 
has weight $w^{(i)}(t) \geq 0$. When resampling occurs at say, time $t_0$, Algorithm \ref{alg-offspring} considers the weight of each particle $i$ and assigns a random number $o^{(i)} \in \posint$ of offsprings to the particle. The number $o^{(i)}$ is chosen such that $\bE[o^{(i)} \, | \, w^{(i)}(t_0)]=w^{(i)}(t_0)$ and subject to the condition that the total number of offsprings of all particles is $N_s$. Each particle $i$ is considered ``dead'' after it gives ``birth'' to $o^{(i)}$ offsprings each of which start at the same location $V^{(i)}(t_0)$ as its parent and with weight $w^{(i)}(t_0)=1$. Subsequently all the offsprings evolve independently. We refer the reader to \cite{bain2008fundamentals} (Section 9.2) and \cite{crisan2002minimal} regarding further details of a specific version of this algorithm that is also used by us and described in Algorithm \ref{alg-offspring}.  

\subsection{Processes $V$ and $w$}\label{sec-Vw}

In this section we define a pair of processes $(V,w)$ where $V(t) \in \posint^{n_1}$ and $w(t) \geq 0$ as follows. 
We initialize $V(0)$ with the same distribution as that of the initial state $X(0)=x_0$ and set $w(0)=1$. 
In between jump times, that is, for $t_k \leq t < t_{k+1}$ where $k=0,1,\dots$, the process $V$ is evolved according to the underlying reaction network with {\em all observable reactions removed}. Thus only 
the copy number of the first $n_1$ species can change during $(t_k,t_{k+1})$. Moreover, for $t_k \leq t < t_{k+1}$ where $k=0,1,\dots$, the weight process $w(t)$ is evolved according to 
\begin{equation}\label{eq-w}
w(t) = w(t_{k}) \,\exp\left\{-\int_{t_k}^t a^{\sO}(V(s),y(t_{k}))\, ds \right\}.
\end{equation}
Equivalently, we note that $w(t)$ satisfies the ODE
\[
w'(t) = w(t) \, a^{\sO}(V(t),y(t_k)) 
\]
for $t_k \leq t < t_{k+1}$ (at $t=t_k$ the right-hand side derivative is considered).
At jump times $t=t_k$ for $k=1,2,\dots$ the process $w(t)$ jumps as follows. For $j \in \sO_k$, set
\begin{equation}\label{eq-w-jump}
w(t_k) = w(t_k-) \, a_j(V(t_k-),y(t_{k-1})) \;\; \text{ with probability } \frac{1}{|\sO_k|}.
\end{equation}
Moreover, at jump times $t=t_k$ for $k=1,2,\dots$ the process $V(t)$ jumps as follows. For $j \in \sO_k$, set
\begin{equation}\label{eq-V-jump}
V(t_k) = V(t_k-) + \nu_j' \;\;  \text{ with probability } \frac{1}{|\sO_k|}.  
\end{equation}
We note that if $a_j(V(t_k-),y(t_{k-1}))=0$ for the chosen 
$j \in \sO_k$, it may be possible that $V(t_k)$ is assigned an infeasible state, that is a state that has negative components. However, at the same time $w(t_k)$ would be zero and hence the value of $V(t_k)$ would not matter.

Our goal is to show that 
\[
\bE[1_{\{x\}}(V(t)) \ w(t) ] = \rho(t,x)
\]
for all $t$ and $x$, where $\rho$ is defined by \eqref{eq-rho-deriv} and \eqref{eq-rho-jump}.
The following two lemmas basically establish that. 
\begin{lemma}\label{lem-feynman-kac}
Let $V$ and $w$ be defined as above, and let $\rho$ defined by
\begin{equation}\label{eq-rho-interp2}
  \rho(t,x) = \bE[ 1_{\{x\}}(V(t)) \, w(t)].
\end{equation}
If $V$ is non-explosive then $\rho(t,x)$ satisfies \eqref{eq-rho-deriv}
 on $t_k \leq t < t_{k+1}$. 
\end{lemma}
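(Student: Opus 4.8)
The plan is to derive the evolution equation for $\rho(t,x) = \bE[1_{\{x\}}(V(t))\, w(t)]$ directly from the dynamics of the pair $(V,w)$ on a fixed inter-jump interval $t_k \le t < t_{k+1}$. On this interval, $V$ evolves as a Markov jump process governed only by the unobservable reactions (propensities $a_j(\cdot, y(t_k))$ for $j \in \sU$), and $w$ evolves deterministically given the trajectory of $V$ via $w'(s) = -w(s)\, a^{\sO}(V(s), y(t_k))$. So I would work with the function $g(t,x) := \bE[1_{\{x\}}(V(t))\, w(t)]$ and compute $\frac{d}{dt} g(t,x)$ by a standard infinitesimal-generator argument: condition on the state at time $t$, look at what can happen in $(t, t+h]$, and collect the $O(h)$ terms.

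First I would write, for small $h>0$, the decomposition of $\bE[1_{\{x\}}(V(t+h))\, w(t+h)]$ according to the mutually exclusive events in $(t,t+h]$: (i) no unobservable reaction fires, (ii) exactly one unobservable reaction $j \in \sU$ fires, (iii) two or more reactions fire (probability $o(h)$). In case (i), $V(t+h) = V(t)$ and $w$ gets multiplied by $\exp\{-\int_t^{t+h} a^{\sO}(V(s),y(t_k))\,ds\} = 1 - h\, a^{\sO}(V(t),y(t_k)) + o(h)$; the probability of this case given $V(t)=x$ is $1 - h\, a^{\sU}(x,y(t_k)) + o(h)$. In case (ii) with reaction $j$, the contribution requires $V(t) = x - \nu_j'$ before the jump, fires with probability $a_j(x-\nu_j',y(t_k))\, h + o(h)$, and the weight factor is $1 + o(1)$. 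Assembling these, dividing by $h$, and letting $h \to 0+$ gives exactly
\begin{equation*}
g'(t,x) = \sum_{j \in \sU} g(t,x-\nu_j')\, a_j(x-\nu_j',y(t_k)) - \sum_{j \in \sU} g(t,x)\, a_j(x,y(t_k)) - g(t,x)\, a^{\sO}(x,y(t_k)),
\end{equation*}
which is \eqref{eq-rho-deriv}. The non-explosivity of $V$ is exactly what is needed to justify that the probability of two or more jumps in $(t,t+h]$ is $o(h)$ and that the sums over $j \in \sU$ (and hence over reachable $x - \nu_j'$) are finite/well-behaved, so interchanging limit and summation/expectation is legitimate.

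The main obstacle I expect is making the informal $o(h)$ bookkeeping rigorous — specifically, controlling the weight factor $\exp\{-\int_t^{t+h} a^{\sO}(V(s),y(t_k))\,ds\}$ uniformly enough to pass to the limit, and justifying differentiability of $g(t,x)$ in $t$ rather than merely deriving a formal relation. One clean way around this is to avoid pointwise differentiation and instead establish the equivalent integral (mild) form: show that $g(t,x)$ satisfies the integrated version of \eqref{eq-rho-deriv} by using the strong Markov property at the first unobservable jump time after $t_k$ together with the explicit weight formula \eqref{eq-w}, i.e., a Dynkin-type / first-jump decomposition. Then, since $g$ is (by the integral equation and non-explosivity) continuous and the right-hand side is continuous in $t$, one recovers the ODE. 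A secondary point to handle carefully is that $w(t)$ is not bounded a priori, so I would note that $0 \le w(s) \le w(t_k)$ on the interval because $a^{\sO} \ge 0$ makes the exponential integrand nonnegative, giving the domination needed for all expectation-limit interchanges; combined with non-explosivity of $V$ this closes the argument. The jump relation \eqref{eq-rho-jump} is not part of this lemma and would be treated separately (presumably in the companion lemma referenced in the text).
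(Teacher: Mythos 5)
Your derivation is correct, but it follows a genuinely different route from the paper's. The paper argues pathwise: it writes $1_{\{x\}}(V(t))\,w(t)$ on $[t_k,t_{k+1})$ as its value at $t_k$ plus the Lebesgue integral of $1_{\{x\}}(V(s))\,w'(s)$ plus a sum of jump contributions expressed through the counting processes $R_j$ of the unobservable channels, then takes expectations using the fact that $R_j$ has stochastic intensity $a_j(V(s-),y(t_k))$ (the martingale property of the compensated counting process), applies Fubini, and differentiates in $t$ to land on \eqref{eq-rho-deriv}. You instead run the classical forward-equation argument: condition at time $t$, enumerate the events on $(t,t+h]$ (no unobservable jump, exactly one, two or more), collect the $O(h)$ terms, and let $h\to 0+$, with the sensible fallback of establishing the integral (first-jump/Dynkin) form first and then recovering the ODE. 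Both arguments need non-explosivity and an integrability/interchange step; the paper's intensity formulation absorbs the $o(h)$ bookkeeping into the compensator, which is convenient on an infinite state space where your ``two or more jumps is $o(h)$'' statement is not uniform in the state and therefore genuinely requires the dominated-convergence argument you sketch. Your bound $0\le w(s)\le w(t_k)$ on the inter-jump interval (valid since $a^{\sO}\ge 0$) supplies the pointwise domination, though you still implicitly need $\bE[w(t_k)]<\infty$ for the interchange --- an integrability assumption the paper's proof also leaves implicit when it invokes the intensity and Fubini. In short: your route is more elementary and flags the analytic caveats explicitly; the paper's is shorter and closer in spirit to a Feynman--Kac computation, but the two reach \eqref{eq-rho-deriv} by different decompositions.
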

\begin{proof}
As a function of $t$, $1_{\{x\}}(V(t))$  is of bounded variation and
piecewise constant on every bounded interval of time. Moreover, $w(t)$ is absolutely
continuous in $t$. Thus we may write for $t_k \leq t < t_{k+1}$
\[
\begin{aligned}
1_{\{x\}}(V(t)) \, w(t) &= 1_{\{x\}}(V(t_k)) \, w(t_k) + \int_{t_k}^t 1_{\{x\}}(V(s))
\, w'(s) ds\\
&+ \sum_{t_k < s \leq t} w(s) \, \left(1_{\{x\}}(V(s)) -
  1_{\{x\}}(V(s-))\right)\\
&=1_{\{x\}}(V(t_k)) \, w(t_k) - \int_{t_k}^t 1_{\{x\}}(V(s)) \, w(s) \, a^{\sO}(V(s),y(t_k))
  \, ds \\
&+ \sum_{j \in \sU} \sum_{t_k <s \leq t} w(s) \, \left(1_{\{x\}}(V(s-)+\nu_j') -
  1_{\{x\}}(V(s-))\right) \, \left(R_j(s)-R_j(s-)\right),
\end{aligned}
\]
where $R_j$ is the process that counts the number of
firings of reaction channel $j$ during $(0,t]$. 
Since the stochastic intensity \cite{Bremaud, anderson2011continuous} of $R_j$ is given by 
$a_j(V(t-))$, taking expectations we obtain
\[  
\begin{aligned}
\rho(t,x) &= \rho(t_k,x) + \sum_{j \in \sU} \bE\left[\int_0^t \left(1_{\{x\}}(V(s)+\nu_j') -
  1_{\{x\}}(V(s))\right) \, w(s) \, a_j(V(s),y(t_k)) \, ds\right]\\
&- \bE\left[ \int_0^t 1_{\{x\}}(V(s)) \, w(s) \, a^{\sO}(V(s),y(t_k))
  \, ds\right].
\end{aligned}
\]  
Using Fubini and differentiating with respect to $t$, We obtain that
\[
\begin{aligned}
\rho'(t,x) &= \sum_{j \in \sU} \bE[ (1_{\{x\}}(V(t)+\nu_j')-1_{\{x\}}(V(t))) \,
  a_j(V(t),y(t_k)) \, w(t)]\\
&- \bE[ 1_{\{x\}}(V(t))\, w(t)\,
  a^{\sO}(V(t),y(t_k))]\\
&= \sum_{j \in \sU} \bE[1_{\{x-\nu_j'\}}(V(t)) \, a_j(x-\nu_j',y(t_k))\, w(t)]
- \sum_{j \in \sU} \bE[1_{\{x\}}(V(t)))\, a_j(x,\nu_j') \, w(t)]\\
&- \bE[ 1_{\{x\}}(V(t))\, w(t)\,
  a^{\sO}(x,y(t_k))]\\
&= \sum_{j \in \sU} \rho(t,x-\nu_j') \, a_j(x-\nu_j',y(t_k)) - \sum_{j \in
  \sU} \rho(t,x) \, a_j(x,y(t_k)) - \rho(t,x) \, a^{\sO}(t,x),
\end{aligned}
\]
which agrees with \eqref{eq-rho-deriv}. 
\end{proof}

\begin{lemma}
Let $k \in \{1,2,\dots\}$ and suppose 
\[
\bE[ 1_{\{x\}}(V(t_k-)) \, w(t_k-) ] = \rho(t_k-,x)
\]
for all $x$. Then 
\[
\bE[ 1_{\{x\}}(V(t_k)) \, w(t_k) ] = \rho(t_k,x)
\]
\end{lemma}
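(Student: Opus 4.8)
The plan is to compute $\bE[1_{\{x\}}(V(t_k))\,w(t_k)]$ by conditioning on the state $(V(t_k-),w(t_k-))$ just before the jump together with the discrete randomization that selects which channel $j\in\sO_k$ fires. The crucial structural fact is that the index $j$ chosen at time $t_k$ in \eqref{eq-w-jump} and \eqref{eq-V-jump} is drawn uniformly from $\sO_k$ independently of $(V(t_k-),w(t_k-))$ and of the past, so the tower property gives
\[
\bE\big[1_{\{x\}}(V(t_k))\,w(t_k)\,\big|\,V(t_k-),w(t_k-)\big]
=\frac{1}{|\sO_k|}\sum_{j\in\sO_k}1_{\{x\}}\big(V(t_k-)+\nu_j'\big)\,a_j\big(V(t_k-),y(t_{k-1})\big)\,w(t_k-).
\]
Taking expectations and using linearity, together with the identity $1_{\{x\}}(V(t_k-)+\nu_j')=1_{\{x-\nu_j'\}}(V(t_k-))$, reduces the problem to evaluating, for each $j\in\sO_k$,
\[
\bE\big[1_{\{x-\nu_j'\}}(V(t_k-))\,a_j\big(V(t_k-),y(t_{k-1})\big)\,w(t_k-)\big].
\]

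The next step is to observe that on the event $\{V(t_k-)=x-\nu_j'\}$ the propensity $a_j(V(t_k-),y(t_{k-1}))$ is the deterministic constant $a_j(x-\nu_j',y(t_{k-1}))$, so this factor can be pulled outside the expectation, leaving $a_j(x-\nu_j',y(t_{k-1}))\,\bE[1_{\{x-\nu_j'\}}(V(t_k-))\,w(t_k-)]$. By the hypothesis of the lemma, $\bE[1_{\{x-\nu_j'\}}(V(t_k-))\,w(t_k-)]=\rho(t_k-,x-\nu_j')$. Summing over $j\in\sO_k$ and dividing by $|\sO_k|$ then yields exactly the right-hand side of \eqref{eq-rho-jump}, i.e. $\rho(t_k,x)$, which completes the argument. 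Note that $y(t_{k-1})=y(t_k-)$ since $y$ is right continuous and has no jump on $(t_{k-1},t_k)$, so the argument of $a_j$ matches the one appearing in the pre-jump process.

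The only delicate point — and the place where a little care is needed rather than where any real obstacle lies — is the remark made just before the lemma that $V(t_k)$ may land in an infeasible state (a state with a negative component) when $a_j(V(t_k-),y(t_{k-1}))=0$ for the selected $j$. This does not affect the computation: on that event the accompanying weight $w(t_k)=w(t_k-)\,a_j(V(t_k-),y(t_{k-1}))$ vanishes, so the product $1_{\{x\}}(V(t_k))\,w(t_k)$ is zero regardless of where $V(t_k)$ sits, and the corresponding term $1_{\{x-\nu_j'\}}(V(t_k-))\,a_j(x-\nu_j',y(t_{k-1}))\,w(t_k-)$ in the sum above is likewise zero. One should also note that the finitely many sums over $\sO_k$ and the conditioning introduce no integrability issues beyond those already implicit in $\rho(t_k-,\cdot)$ being well defined. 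With these observations, the identity follows purely by the conditioning-and-linearity computation sketched above.
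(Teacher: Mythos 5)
Your proposal is correct and follows essentially the same route as the paper: condition on $(V(t_k-),w(t_k-))$ and the uniform choice of $j\in\sO_k$, use $1_{\{x\}}(V(t_k-)+\nu_j')=1_{\{x-\nu_j'\}}(V(t_k-))$ to freeze the propensity at $x-\nu_j'$, then take expectations and invoke the hypothesis together with \eqref{eq-rho-jump}. Your extra remarks on the zero-weight/infeasible-state case and on $y(t_k-)=y(t_{k-1})$ only spell out points the paper handles in the surrounding text.
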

\begin{proof}
It follows that the conditional expectation
\[
\begin{aligned}
\bE[ &w(t_k)\, 1_{\{x\}}(V(t_k)) \, | \, w(t_k-), V(t_k-) ] \\
 &= \frac{1}{|\sO_k|} \, \sum_{j \in \sO_k} w(t_k-) \, a_j(x-\nu_j',y(t_{k-1})) \,
 1_{\{x-\nu_j'\}}(V(t_k-)).
\end{aligned}
\]
Taking expectation, we obtain
\[
\bE[ w(t_k)\, 1_{\{x\}}(V(t_k))] = \frac{1}{|\sO_k|} \, \sum_{j \in \sO_k} \rho(t_k-,x-\nu_j') \, a_j(x-\nu_j',y(t_{k-1})).
\]
The result follows from \eqref{eq-rho-jump}.
\end{proof}

From the above lemmas and mathematical induction, it follows that 
\[
\bE[1_{\{x\}}(V(t)) \, w(t)] = \rho(t,x),
\]
for all $t \geq 0$ and $x$.

Algorithm \ref{alg-overall} describes the overall filtering algorithm whereas Algorithm \ref{alg-continuous-nokill} describes the continuous evolution and Algorithm \ref{alg-jump} describes the jumps at observed jump times $t_k$. Algorithm \ref{alg-offspring} describes the resampling via offsprings \cite{bain2008fundamentals}. We observe that in Algorithm \ref{alg-overall}, resampling is only performed at the observed jump times $t_k$ and not necessarily always. In Section \ref{sec-resampling} we numerically explore different resampling strategies. One extreme option is always to resample at $t_k$, the other extreme is never to resample. The third option is to resample at $t_k$ if either the number of zero weights among the $N_s$ particles exceeds a predetermined number or if the ratio of the largest weight to the smallest non-zero weight 
among the particles exceeds a predetermined value. 
If resampling is not performed at $t_k$, then the 
weights are rescaled so that the average weight is $1$. 
This latter step prevents all weights from growing 
to be too large or too small. We also note that it is possible to consider resampling at time points which may not coincide with the observed jump times $t_k$. However, since the only point in time where some weights may become zero is at the jump times $t_k$, it makes sense to consider the jump times as the points in time to determine if a resampling step is required.   

\begin{algorithm}[th]  
\caption{Overall scheme}
\label{alg-overall} 
\begin{algorithmic}[1]
  %\Function{Generate-Sample-according-to-Algorithm-0}{$x_0,(t_1,\dots,t_{N_k}),(y_0,y_1,\dots,y_{N_k}),T$}  
\State \textbf{Input:} Jump times of $Y$ $(t_1,\dots,t_{N_k})$ and observed $Y$ at jump times $(y_1,\dots,y_{N_k})$. Initial distribution $\mu_0$ for $X(0)=x_0$, parameter value $c$, final time $T$, filter sample size $N_s$.
\State 
Generate $N_s$ i.i.d\ sample from $\mu_0$ and assign to $V^{(1)},\dots,V^{(N_s)}$. Set $w^{(i)}=1$ for $i=1,\dots,N_s$.
\State $k=1$, $t=0$, $y=y_0$ 
\For { $k = 1$ to $N_k$}
\For { $i=1$ to $N_s$ }
\State $(V^{(i)}_-,w^{(i)}_-) = \text{Continuous-evolution}(V^{(i)},w^{(i)},t,t_k,y,c)$
\State $(V^{(i)},w^{(i)}) = \text{Jump}(V^{(i)}_-,w^{(i)}_-,y,y_k,c)$
\EndFor
\If {resampling}  $(V, w)=\text{Offsprings}(V, w)$
\Else { rescale } $w$ so that $\sum_{i=1}^{N_s}w^{(i)}=N_s$
\EndIf
\State Set $t=t_k$, $y=y_k$
\EndFor
\For { $i=1$ to $N_s$}
\State $(V^{(i)},w^{(i)}) =
\text{Continuous-evolution}(V^{(i)},w^{(i)},t_{N_k},T,y_{N_k},c)$
\EndFor
%\State \Return $(V,w)$
%\EndFunction
\end{algorithmic}
\end{algorithm}

\begin{algorithm}[ht]                       
\caption{Continuous evolution}
\label{alg-continuous-nokill}  
\begin{algorithmic}[1]
  \Function{Continuous-evolution}{$V_0,w_0,t_0,t_f,y, c$}
  \State Assume unobservable reactions are numbered $1,2,\dots,m_u$
  \State Assume observable reactions are numbered $m_u+1,\dots,m$
  \State Set $V=V_0$, $w=w_0$, $t=t_0$
  \While{$t < t_f$}
  \For {$j=1$ to $m_u$}
  \State Set $\lambda_j=a_j(V,y,c)$
  \EndFor  
  \State Set $\lambda^{\sO}=\sum_{j=m_u+1}^m \lambda_j$
  \State Set $\lambda^{\sU}=\sum_{j=1}^{m_u} \lambda_j$
  \State Generate $\tau \sim \text{Exponential}(\lambda^{\sU})$
  \If{$t+\tau < t_f$}
  \State Generate $u \sim \text{Uniform}[0,1]$
  \State Find $j \in \{1,\dots,m_u\}$  such that $\sum_{l=1}^{j-1} \lambda_l < u\lambda^{\sU} \leq  \sum_{l=1}^{j} \lambda_l$
  \State Set $V=V+\nu_j'$
  \EndIf 
  \State Set $t_n=\min\{t+\tau,t_f\}$
  \State Set $w = w \times \exp\{-(t_n-t) \, \lambda^{\sO}\}$
  \State Set $t = t_n$
  \EndWhile
  \State \Return $(V,w)$
  \EndFunction
\end{algorithmic}
\end{algorithm}

\begin{algorithm}[th]                       
\caption{Jump}
\label{alg-jump}  
\begin{algorithmic}[1]
  \Function{Jump}{$V_-,w_-,y_-,y,c$}
  \State Assume unobservable reactions are numbered $1,2,\dots,m_u$
  \State Assume observable reactions are numbered $m_u+1,\dots,m$
  \State Find indices $j \in \{m_u+1,\dots,m\}$ that satisfy $\nu_j''=y-y_-$.
  \State Let above indices be stored in array $J$ of length $L$. 
  \For {$i=1$ to $L$}
  \State Set $\lambda_i=a_{J(i)}(V_-,y_-,c)$
  \EndFor
  \If {$L>1$}
  \State Generate $u \sim \text{Uniform}[0,1]$
  \State Let $i \in \{1,\dots,L\}$ be such that $i-1< u \, L \leq i$
  \Else
  \State Set $i=1$
  \EndIf
  \State Set $V=V_- +\nu_{J(i)}'$
  \State Set $w = w_- \, \lambda_{J(i)}$
  \State \Return $(V,w)$
  \EndFunction
\end{algorithmic}
\end{algorithm}

\begin{algorithm}[th]
\caption{Offsprings}
\label{alg-offspring} 
\begin{algorithmic}[1]
\Function{Offsprings}{$V,w$}
\State We note that $V=(V_1,\dots,V_{N_s})$ and $w=(w_{1},\dots,w_{N_s})$.  We denote $[x]$ by the integer part of a number $x$, that is the greatest integer less than or equal to $x$, while $\{x\} = x - [x] $ the fractional part of $x$.
\State Normalize $\bar{w}_i = w_i/\sum_{i=1}^N w_i$.
\State Simulate i.i.d random variables $u_i$ $\sim$ Unif[0,1], for $i=1,...,N-1$
\State Initialize $g=N, h=N$
\For{$i=1$ to $N-1$}
\If{$\{N\bar{w}_i\} + \{g- N\bar{w}_i\} < 1$}
\If{$u_i < 1 - (\{N\bar{w}_o\}/\{g\})$}
\State $o_i = [N\bar{w}_i]$
\Else
\State $o_i = [N\bar{w}_i] + (h- [g]) $
\EndIf
\Else
\If{$u_i < 1 - \frac{1-\{N\bar{w}_i\}}{1-\{g\}}$}
\State $o_i =  [N\bar{w}_i] + 1$
\Else
\State $o_i =  [N\bar{w}_i] + (h-[g])$
\EndIf
\EndIf
\EndFor
%\State We now have $o_i$ copies of particle $i$ at location $V^{(i)}$. 
\State Set $j=1$
\For{$i=1$ to $N_s$}
\State $w_i=1$
\For{$l=1$ to $o_i$}
\State $\tilde{V}_j = V_i$, $j = j + 1$
\EndFor
\EndFor
%\State Reindex the positions of the particles as $v_1(t), v_2(t), ..., v_N(t)$
%\State Set $w_i = 1$ for $i=1,...,N$
\State \Return $(\tilde{V},w)$
\EndFunction
\end{algorithmic}
\end{algorithm}

\subsection{Treating parameters as random variables}
In general, the propensity functions $a_j(x,y)$ depend on a vector $c$ of
parameters, thus $a_j=a_j(x,y,c)$ where $c \in \real^p$. Here we consider
a Bayesian framework for inferring the parameters $c$ from the observation
$Y(s)=y(s)$ for $0 \leq s \leq t$. This involves treating $c$ as a random
variable, or rather a stochastic process $C(t)$ which remains constant in
time $t \geq 0$. Thus, we may ``absorb'' $C(t)$ into $X(t)$, thus expanding
the unobserved components of the state by $p$ extra dimensions. 
The Bayesian framework involves starting with a prior distribution
$\bar{\mu}$ on the parameter space $\real^p$. Then we compute the posterior
distribution probability mass/density function $\pi(t,x,c)$ which is
characterized by 
\[
P\{X(t)=x, C(t) \in A \,| \, Y(s)=y(s), \, 0 \leq s \leq t\} = \int_{c \in A}
\pi(t,x,c) \, dc.
\]
Since the parameters are distributed continuously, the filtering equations \eqref{eq-pi-deriv} and \eqref{eq-pi-jump}
need to be modified as follows.
For $t_k \leq t < t_{k+1}$, $\pi$ satisfies the following system of differential equations: 
\begin{equation}\label{eq-pi-deriv-para}
\begin{aligned}  
\pi'(t,x,c) &= \sum_{j \in \sU} \pi(t,x-\nu_j',c)\,a_j(x-\nu_j',y(t_k),c) - \sum_{j \in
  \sU} \pi(t,x,c)\,a_j(x,y(t_k),c)\\
&- \pi(t,x,c) \left(a^{\sO}(x,y(t_k),c) -
\int_{\tilde{c}} \sum_{\tilde{x}}a^{\sO}(\tilde{x},y(t_k),\tilde{c}) \pi(t,\tilde{x},\tilde{c}) d\tilde{c}\right),
\end{aligned}
\end{equation}
which holds for all $x \in \posint^{n_1}$ and $c \in \real^p$. 
At times $t_k$, $\pi(t,x,c)$ jumps according to:
\begin{equation}\label{eq-pi-jump-para}
\pi(t_k,x,c) = \frac{\sum_{l \in \sO_k} a_l(x-\nu_l',y(t_{k-1}),c)\,
  \pi(t_k-,x-\nu_l',c)}{\int_{\tilde{c}} \sum_{\tilde{x}}\sum_{l \in \sO_k}
  a_l(\tilde{x},y(t_{k-1}),\tilde{c})\, \pi(t_k-,\tilde{x},\tilde{c}) d\tilde{c}},
\end{equation}
which holds for all $x \in \posint^{n_1}$ and $c \in \real^p$. 

Algorithm \ref{alg-overall-bayes} describes the Monte Carlo algorithm to
generate a weighted sample from the posterior distribution for the parameter. 
Having computed $C^{(i)}$ and $w^{(i)}$ for $=1,\dots,N_s$, we may estimate
the posterior mean $\bar{C}$ and standard deviation $\sigma_C$ by
\begin{equation}\label{eq-C-bar}
  \bar{C} = \frac{\sum_{i=1}^{N_s} w^{(i)} \, C^{(i)}}{\sum_{i=1}^{N_s} w^{(i)}},
\end{equation}
and
\begin{equation}
  \sigma_C^2 = \frac{\sum_{i=1}^{N_s} w^{(i)} \, (C^{(i)}-\bar{C})^2}{\sum_{i=1}^{N_s} w^{(i)}}.
\end{equation}

\begin{algorithm}[th]  
\caption{Overall scheme for Bayesian inference}
\label{alg-overall-bayes} 
\begin{algorithmic}[1]
  %\Function{Generate-Sample-according-to-Algorithm-0}{$x_0,(t_1,\dots,t_{N_k}),(y_0,y_1,\dots,y_{N_k}),T$}  
\State \textbf{Input:} Jump times of $Y$ $(t_1,\dots,t_{N_k})$ and observed $Y$ at jump times $(y_1,\dots,y_{N_k})$. Initial distribution $\mu_0$ for $X(0)=x_0$, prior distribution $\bar{\mu}$ for parameter $c$, final time $T$, filter sample size $N_s$.  
%\State Jump times of $Y$ are $(t_1,\dots,t_{N_k})$, final time $T$.
%\State $N_s$ is the number of samples to be generated.
%\State Observed $Y$ at jump times $(y_1,\dots,y_{N_k})$.
%\State Initial distribution $\mu_0$ for $X(0)=x_0$ is given.
%\State Prior distribution $\bar{\mu}$ for parameters is given. 
%\For { $i=1$ to $N_s$}
\State Generate a sample $x_0$ from $\mu_0$ and a sample $c$ from $\bar{\mu}$. Set $V^{(i)}=x_0$, $w^{(i)}=1$, $C^{(i)}=c$ for $i=1$ to $N_s$
%\EndFor
\State Set $k=1$, $t=0$, $y=y_0$ 
\For { $k = 1$ to $N_k$}
\For { $i=1$ to $N_s$ }
\State $(V^{(i)}_-,w^{(i)}_-) = \text{Continuous-evolution}(V^{(i)},w^{(i)},t,t_k,y,C^{(i)})$
\State $(V^{(i)},w^{(i)}) = \text{Jump}(V^{(i)}_-,w^{(i)}_-,y,y_k,C^{(i)})$
\EndFor
%\For {$i=1$ to $N_s$}
%\State $U^{(i)} = (V^{(i)},C^{(i)})$
%\EndFor
\State Set $U^{(i)} = (V^{(i)},C^{(i)})$ for $i=1$ to $N_s$
\If {resampling}  $(U, w)=\text{Offsprings}(U, w)$
\Else { rescale } $w$ so that $\sum_{i=1}^{N_s}w^{(i)}=N_s$
\EndIf
%\For {$i=1$ to $N_s$}
%\State Set $(V^{(i)},C^{(i)})=U^{(i)}$.
%\EndFor
\State Set $(V^{(i)},C^{(i)})=U^{(i)}$ for $i=1$ to $N_s$
\State Set $t=t_k$, $y=y_k$
\EndFor
\For { $i=1$ to $N_s$}
\State $(V^{(i)},w^{(i)}) =
\text{Continuous-evolution}(V^{(i)},w^{(i)},t_{N_k},T,y_{N_k},C^{(i)})$
\EndFor
%\State \Return $(V,w)$
%\EndFunction
\end{algorithmic}
\end{algorithm}

%\subsection{Intermediate state estimation}
\subsection{Estimating past state}
Besides the real-time update of the conditional probability mass function $\pi(t,x) = P\{X(t)=x \, | \, Y(s)=y(s), \, 0 \leq s \leq t \}$, we would like to look back and consider estimation of the state at time $t$ with the observation made until a later time $T$. A special case is when the initial state of the system is not known precisely, but rather a prior distribution is the best of our knowledge. 

To tackle the problem of finding 
\[
P\{X(t_0)=x \, | \, Y(s)=y(s), \, 0 \leq s \leq T \}
\]
with $0 \leq t_0 \leq T$, we expand the state of the underlying process as $(X,\tilde{X},Y)$ where $\tilde{X}$ is simply a copy of $X$ until time $t_0$ and there after $\tilde{X}(t)$ 
remains fixed at the value $\tilde{X}(t_0)$. 

The process $(X,\tilde{X},Y)$ with state space $\posint^{n_1} \times \posint^{n_1} \times \posint^{n_2}$ will be a piecewise time homogeneous Markov process in the sense that on the interval $[0,t_0]$ it will evolve 
according to a certain reaction network and during $(t_0,T]$ 
it will evolve according to a different reaction network. During $[0,t_0]$ whenever $X$ jumps by $\nu_j'$, $\tilde{X}$ also jumps by $\tilde{\nu}_j'=\nu_j'$. However, during $(t_0,T]$ the stoichiometric vectors $\tilde{\nu}_j'$  corresponding to $\tilde{X}$ are all zero indicating no change in state. Based on this, the original filtering equations \eqref{eq-pi-deriv}, \eqref{eq-pi-jump}, \eqref{eq-rho-deriv} and \eqref{eq-rho-jump} are still valid for this extended process. 
We are simply estimating 
\[
P\{\tilde{X}(T)=x \, | \, Y(s)=y(s), \; 0 \leq s \leq T\}
\]
which equals
\[
P\{X(t_0)=x \, | \, Y(s)=y(s), \, 0 \leq s \leq T \}
\]
since $\tilde{X}(T)=X(t_0)$.
%Let $\tilde{X}(t) = X(t_0)$, we could first run the Algorithm \ref{alg-overall} up till $t_0$ to obtain $\tilde{V}(t_0), w(t_0)$. Then we put $U(t) =(V(t), V(t_0), w(t))$, and as $U(t)$ evolves incorporating the observation $\sY_t$, our estimation of $P\{X(t_0)=x \, | \, Y(s)=y(s), \, 0 \leq s \leq T \}$ also get update.
Algorithm \ref{alg-interm-state} describes this. 

\begin{algorithm}[th]  
\caption{Past state estimation}
\label{alg-interm-state} 
\begin{algorithmic}[1]
  %\Function{Generate-Sample-according-to-Algorithm-0}{$x_0,(t_1,\dots,t_{N_k}),(y_0,y_1,\dots,y_{N_k}),T$}  
%\State Jump times of $Y$ are $(t_1,\dots,t_{N_k})$, final time
%$T$. Observed $Y$ at jump times $(y_1,\dots,y_{N_k})$.
%\State $N_s$ is the number of samples to be generated.
%\State Initial distribution $\mu_0$ for $X(0)=x_0$ is given. Parameter value $c$ is given.
%\State Intermediate time $t_0$ is given, and $k_1 = \max\{k : t_k \leq t_0\}$.
%\State Generate a sample $x_0$ from $\mu_0$, $V^{(i)}= \Tilde{V}^{(i)} = x_0$, $w^{(i)}=1$ for $i = 1, \cdots, N_s$.
\State \textbf{Input:} Jump times of $Y$ $(t_1,\dots,t_{N_k})$ and observed $Y$ at jump times $(y_1,\dots,y_{N_k})$. Initial distribution $\mu_0$ for $X(0)=x_0$, parameter value $c$, intermediate time $t_0$, final time $T$, filter sample size $N_s$.
\State Set $k=1$, $t=0$, $y=y_0$ 
\For { $k = 1$ to $k_1$}
\For { $i=1$ to $N_s$ }
\State $(V^{(i)}_-,w^{(i)}_-) = \text{Continuous-evolution}(V^{(i)},w^{(i)},t,t_k,y,c)$
\State $(V^{(i)},w^{(i)}) = \text{Jump}(V^{(i)}_-,w^{(i)}_-,y,y_k,c)$
%\State $\tilde{V}^{(i)} = V^{(i)}$
\EndFor
\If {resampling}  $(V, w)=\text{Offsprings}(V, w)$
\Else { rescale } $w$ so that $\sum_{i=1}^{N_s}w^{(i)}=N_s$
\EndIf
\State Set $t=t_k$, $y=y_k$
\EndFor
\State $(V^{(i)},w^{(i)}) =
\text{Continuous-evolution}(V^{(i)},w^{(i)},t,t_0,y_{N_k},c)$ for $i = 1, \cdots, N_s$
\State Set $\tilde{V}^{(i)}=V^{(i)}$ for $i = 1, \cdots, N_s$
\For { $k = k_1 + 1$ to $N_k$}
\For { $i=1$ to $N_s$ }
\State $(V^{(i)}_-,w^{(i)}_-) = \text{Continuous-evolution}(V^{(i)},w^{(i)},t,t_k,y,c)$
\State $(V^{(i)},w^{(i)}) = \text{Jump}(V^{(i)}_-,w^{(i)}_-,y,y_k,c)$
\EndFor
\State Set $U^{(i)} = (V^{(i)}, \tilde{V}^{(i)})$ for $i = 1, \dots, N_s$
\If {resampling}  $(U, w)=\text{Offsprings}(U, w)$
\Else { rescale } $w$ so that $\sum_{i=1}^{N_s}w^{(i)}=N_s$
\EndIf
\State Set $(V^{(i)},\tilde{V}^{(i)})=U^{(i)}$ for $i = 1, \dots, N_s$
%\State Set $U^{(i)} = (V^{(i)}, \tilde{V}^{(i)})$ for $i = 1, \dots, N_s$.$(U, w)=\text{Offsprings}(U, w)$.Set $(V^{(i)},\tilde{V}^{(i)})=U^{(i)}$ for $i = 1, \dots, N_s$
\State Set $t=t_k$, $y=y_k$
\EndFor
%\For { $i=1$ to $N_s$}
\State $(V^{(i)},w^{(i)}) =
\text{Continuous-evolution}(V^{(i)},w^{(i)},t,T,y_{N_k},c)$ for $i = 1, \cdots, N_s$
%\EndFor 
%\State \Return $(V,w)$
%\EndFunction
\end{algorithmic}
\end{algorithm}

\section{Estimation error}\label{sec-est-error}
In this section, we discuss two notions of error in state or parameter estimation. The first is a measure of the error in the estimated conditional distribution and the true conditional distribution while the second is the $L^2$ error in estimating the state or a parameter.

We first note that the conditional distribution $\pi(t,x)$ is 
a function of the observed trajectory of $Y$ from $0$ to $t$. 
For clarity, we write $\pi(t,x,Y_{[0,t]})$. Likewise, we write $\hat{\pi}(t,x,Y_{[0,t]})$ for the estimator.   

In order to measure the error between the estimated 
conditional distribution and the true conditional distribution we introduce the {\em mean total variation error} (MTVE):  
\begin{equation}\label{eq-MTVE}
    \text{MTVE} = \bE \left(\sum_{x \in \posint^{n_1}} | \hat{\pi}(t, x, Y_{[0,t]}) - \pi(t, x, Y_{[0,t]})| \right).
\end{equation}
We also introduce the {\em conditional mean total variation error} (CMTVE):  
\begin{equation}\label{eq-CMTVE}
    \text{CMTVE} = \bE \left(\sum_{x \in \posint^{n_1}} | \hat{\pi}(t, x, Y_{[0,t]}) - \pi(t, x, Y_{[0,t]})| \,\, \Big{|} \, \sY_t \right),
\end{equation}
where $\sY_t$ is the $\sigma$-algebra generated by the observed process $Y$ up to time $t$. The reader unfamiliar with 
$\sigma$-algebras may regard this as condition on the trajectory of $Y$ up to time $t$. Thus, CMTVE is a function of the observed trajectory. 
If we are estimating conditional density of a parameter as opposed to state, the summations in the above definitions need to be replaced by an integrals. 

On the other hand, instead of attempting to describe the entire conditional distribution for the (unobserved) state or a parameter, one is frequently interested in obtaining a point estimate. The ideal point estimate of the unobserved state $X(t)$ is the conditional expectation 
$\bE[X(t) \, | \, \sY_t]$. In practice one estimates the latter via 
the estimator $\sE(t)$ defined by 
\begin{equation}\label{eq-sE}
\sE(t) = \frac{\sum_{i=1}^{N_s} V^{(i)}(t) \, w^{(i)}(t)}{\sum_{i=1}^{N_s} w^{(i)}(t)}. 
\end{equation}
Thus the estimation error is given by $e(t)=\sE(t)-X(t)$. 
The quantities of interest are, the conditional bias $\bE[e(t) \, | \, \sY_t]$, the bias $\bE[e(t)]$, the conditional $L^2$ error 
$(\bE[ e^2(t) \, | \, \sY_t])^{1/2}$ and the $L^2$ error $(\bE[e^2(t)])^{1/2}$.

It is instructive to split $e(t)$ as 
\[
e(t) = \left(\sE(t) - \bE[X(t) \, | \, \sY_t] \right) +  \left( \bE[X(t) \, | \, \sY_t] - X(t) \right).
\]
We make the {\em important observation that $\sE(t)$ and $X(t)$ are independent when conditioned on $\sY_t$.} Hence
\[
\begin{aligned}
&\bE\Big{\{}\Big{(}\sE(t)-\bE[X(t)|\sY_t]\Big{)} \, \Big{(}\bE[X(t)|\sY_t] - X(t)\Big{)} \; \Big{|} \; \sY_t \Big{\}} \\
&= \bE\Big{\{} \Big{(}\sE(t)-\bE[X(t)|\sY_t]\Big{)} \, \Big{|} \, \sY_t \Big{\}}\, \bE\Big{\{} \Big{(}\bE[X(t) | \sY_t] - X(t)\Big{)} \, \Big{|} \, \sY_t \Big{\}}
= 0.
\end{aligned}
\]
Hence we may expand
\begin{equation}\label{eq-cond-L2}
\begin{aligned}
\bE[e^2(t) \, | \, \sY_t] &= \bE\left[ \left(\sE(t) - \bE[X(t)|\sY_t]\right)^2 \, | \sY_t\right] + \bE\left[ \left(\bE[X(t)|\sY_t]-X(t)\right)^2 \, |\sY_t\right]\\
&= \bE\left[ \left(\sE(t) - \bE[X(t)|\sY_t]\right)^2 \, | \, \sY_t\right] + \text{Var}\left[X(t) \, | \, \sY_t\right]
\end{aligned}
\end{equation}
We observe that the second term depends only on the filtering problem and not on the filtering algorithm, while the first term depends on the filtering algorithm. We expect as the filter sample size $N_s$ approaches infinity the first term to approach zero in some sense. Taking expectation on \eqref{eq-cond-L2} we obtain that
\begin{equation}
\bE[e^2(t)] = \bE\left[ \left(\sE(t) - \bE[X(t)|\sY_t]\right)^2\right] + \bE\left[\text{Var}[X(t)\, | \, \sY_t]\right],
\end{equation}
where as observed earlier, the second term depends only on the filtering problem and provides a lower bound on the $L^2$ error.

In situations where observations are obtained via simulations as is the case in this paper, we do know the true value $X(t)$ of the unobserved state and thus $\bE[e^2(t)]$ may be estimated as follows. We simulate the original system via the Gillespie algorithm $N_r$ independent times to obtain $(X^{(1)},Y^{(1)}),\dots,(X^{(N_r)},Y^{(N_r)})$. For each such simulation $j$, we run the overall filter algorithm (with some sample size $N_s$) for the observed trajectory $Y^{(j)}$ to record $\sE_j(t)$, the filter estimate of the conditional expectation. Then we may estimate $\bE[e^2(t)]$ as
\[
\bE[e^2(t)] \approx \frac{1}{N_r} \sum_{j=1}^{N_r} |\sE_j(t) - X^{(j)}(t)|^2.
\]

On the other hand, for any given observation trajectory, estimation of the conditional error $\bE[e^2(t) \, | \sY_t]$ is harder even in situations as in this paper where observations are generated via Monte Carlo simulations. 
%may be estimated as follows assuming that we know the true state $X(t)$ corresponding the observation. 
%Suppose we generate one simulation of the underlying system to generate $(X,Y)$ and then based on trajectory $Y$ simulate $N_r$ independent runs of the overall filter (each time with $N_s$ samples) to obtain estimates $\sE_1(t),\dots,\sE_{N_r}(t)$. Then we may  
%estimate $\bE[e^2(t) \, | \sY_t, X(t)]$ as
%\[
%\bE[e^2(t) \, | \sY_t] \approx \frac{1}{N_r} %\sum_{j=1}^{N_r} |\sE_j(t) - X(t)|^2.
%\]
%Note that $\bE[e^2(t) \, | \sY_t, X(t)]$ is conditioned on the trajectory $Y$ on $[0,t]$ and the final value $X(t)$ of the trajectory $X$. This quantity is different from  $\bE[e^2(t) \, | \, \sY_t]$ and is not helpful. 
If the filter sample size $N_s$ is very large, we may be justified in approximating $\bE[e^2(t) \, | \sY_t]$ 
by $\text{Var}[X(t) \, | \sY_t]$ according to 
\eqref{eq-cond-L2}. This suggests the following estimation of $\bE[e^2(t) \, | \, \sY_t]$. Generate an observation $Y$ of one trajectory and apply the filter with sample size $N_s$ once to obtain an estimate $\mathcal{V}(t)$
of $\text{Var}[X(t) \, | \sY_t]$. Here 
\begin{equation}\label{eq-filt-cond-var}
   \mathcal{V}(t) = \frac{\sum_{i=1}^{N_s} (V^{(i)}(t))^2 \, w^{(i)}(t)}{\sum_{i=1}^{N_s} w^{(i)}(t)} -  \left(\frac{\sum_{i=1}^{N_s} V^{(i)}(t) \, w^{(i)}(t)}{\sum_{i=1}^{N_s} w^{(i)}(t)}\right)^2.  
\end{equation}

\section{Numerical Examples}\label{sec-examples} 
In this section we illustrate the filtering algorithms via examples. In all examples, observations were made by simulation of the underlying reaction network using the Gillespie algorithm \cite{Gillespie77} to obtain one or more independent samples of $(X,Y)$, the unobserved and observed trajectories. Then the filtering algorithms were applied to the observed trajectories.  

\subsection{A linear propensity example}\label{ex-linear-prop}
We consider the simple reaction network 
\begin{equation}
\begin{aligned}
    S &\stackrel{c_1}{\longrightarrow} S + A,\\
    \varnothing &\stackrel{c_2}{\longrightarrow} S,\\
    S &\stackrel{c_3}{\longrightarrow}  \varnothing,\\
\end{aligned}
\end{equation}
consisting of two species and three reaction channels and assume mass action form of propensities.
Thus the propensities are given by $a_1(z) = c_1 z_2$, $a_2(z) = c_2$, $a_3(z) = c_3 z_2$. 

First, consider the case where the copy number of species $S$ is observed exactly while the species $A$ is unobserved. Thus, $Z(t) = (X(t), Y(t)) = (\#A(t), \#S(t))$. In this situation, since the second two reaction channels alone determine the copy number of $S$, the conditional probability density function $\pi(t, x)$ could be computed exactly as
\[
\pi(t,x) = \frac{\lambda^x e^{-\lambda}}{x!}
\]
where $\lambda = \int_0^t c_1 y(t) dt$. 
With initial state $\#A = 0$ and $\#S =5$, parameter values $c= (1,5,1)$, and filter sample size $N_s=10,000$,
we ran Algorithm \ref{alg-overall}, and Figure \ref{fig-cpmf-ex1} shows the comparison between the conditional distribution computed by the filter with the exact conditional distribution.
\begin{figure}[htbp]
    \includegraphics[width=0.8\textwidth]{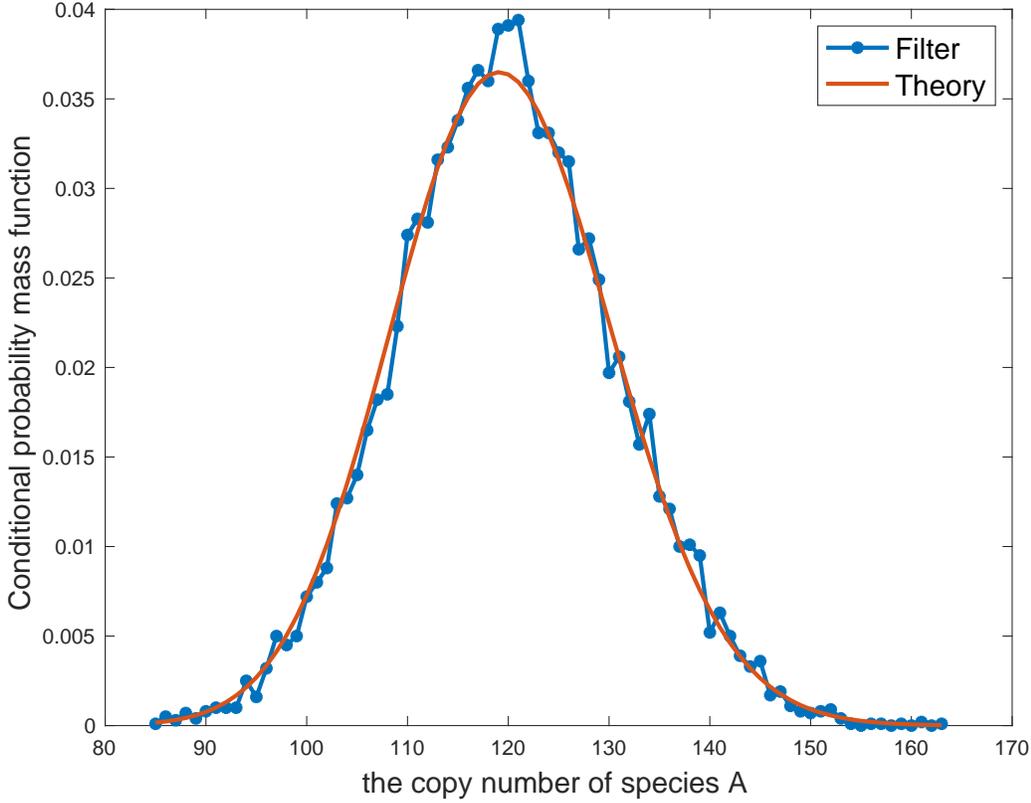}
    \caption{The conditional distribution of the number of $A$ at time $T=20$ conditioned on one observation of the entire trajectory of the copy number of $S$ on the interval $[0,T]$, in the linear propensity example. 
    The estimated conditional distribution $\hat{\pi}(T,x)$ and the exact theoretical conditional distribution $\pi(T,x)$ are shown. 
    %and
    %the computations were we computed conditional probability mass function of the copy number of species $A$ at time $T=20$ given the trajectory of $S$.
    %The blue curve shows the estimated conditional probability mass function of the copy number of species $A$ in the linear propensity example (based on one trajectory of species $S$) at time $T=20$. The red curve shows the exact theoretical distribution.  %The experiment is run under initial condition $z_0 = (z_{A,0}, z_{S,0})=  (0, 5)$, parameter vector $c = (1, 5, 1)$, filter sample size $N_s = 10,000$.
    }\label{fig-cpmf-ex1}
\end{figure}
In order to estimate the CMTVE defined by \eqref{eq-CMTVE}, with same initial state, parameter values, final time $T=20$, filter sample size $N_s=10,000$, and the same observation trajectory, we ran 100 simulations of algorithm \ref{alg-overall}. From 
this we estimated CMTVE for the algorithm to be $0.0475$ with a $95\%$confidence interval of $[0.0461, 0.0488]$.% This shows quantitatively that the algorithm without killing is much superior. Throughout the rest of the paper we shall only be concerned with the algorithm without killing.

We also tested the filter's performance on point estimation. To compare estimated state $\sE(t)$ with the actual state $X(t)$, as described in Section \ref{sec-est-error}, we generated $N_r=500$ independent realizations of the system $(X,Y)$, and applied the filtering algorithm with filter sample size $N_s=1000$ to get an estimate $\sE(t)$ for each of the realizations. Figure \ref{fig-linprop-a-scatter} shows a scatter plot of the values of $\sE(T)$ against $X(T)$. We note that $\sE(T)$ is a biased estimator whose bias 
is expected to approach zero as $N_s$ tends to infinity. The fact that the regression line does not have slope 1 is due to this as well as due to finite sample size of $N_r$. 

%We note that $\bE(X \, | \sY_t)$, the mean of the posterior distribution, is estimated by 
%\begin{equation}
%\sE(t) = \frac{\sum_{i=1}^{N_s} V^{(i)} \, w^{(i)}(t)}{\sum_{i=1}^{N_s} w^{(i)}(t)}. 
%\end{equation}
%See Section \ref{sec-est-error}. 
\begin{figure}[htbp]
    \includegraphics[width=0.7\textwidth]{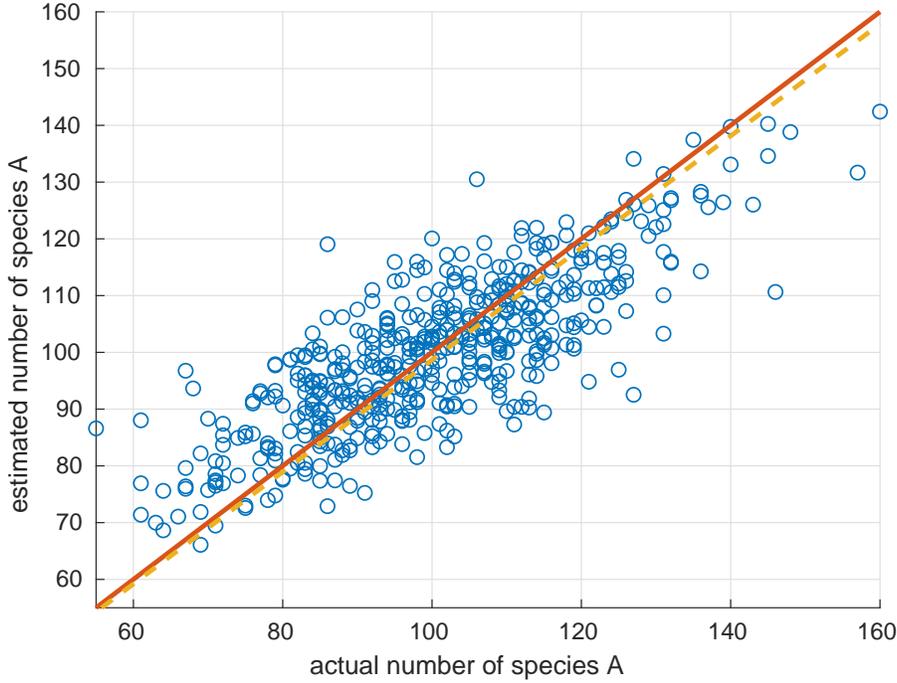}
    \caption{Scatter plot of the point estimates $\sE(T)$ of the copy number of species $A$ at $T=20$ (based on the observation of species $S$ on the interval $[0,T]$) against the actual copy numbers $X(T)$ in the linear propensity example. In each trial, a realization of the system $(X,Y)$ was generated, and we computed the point estimates $\sE(T)$ of the copy number of species $A$ at time $T$ based on the trajectory of species $S$. $500$ independent trials were performed.  %The experiment is run under initial condition $z_0 = (z_{A,0}, z_{S,0})=  (0, 5)$, parameter vector $c = (1, 5, 1)$ and filter sample size $N_s = 1000$, and trial size $N_r = 500$. 
    The dashed yellow line is the regression line and solid red line is the line with slope 1.  
    The bias is $-0.2810$ with a $95\%$ confidence interval of $[-1.20427, 0.642256]$ the estimated $L^2$ error is $10.3159$ with a $95\%$ confidence interval of $[9.58143, 11.0014]$.}\label{fig-linprop-a-scatter}
\end{figure}

Next, with the same system, we considered observing species $A$ rather than species $S$. So $Z(t) = (X(t), Y(t)) = (\#S(t), \#A(t))$.
We kept the initial state and parameter values of $c_1$ and $c_3$ to be the same as before %A scatter plot for the estimation of  $S$ is shown in Fig \ref{fig-linprop-s-scatter}. 
%\begin{figure}
%    \includegraphics[width=0.7\textwidth]{linprop_s_T20.eps}
%    \caption{Scatter plot of the point estimates $\sE(T)$ of the copy number of species $S$ (based on the observation of species $A$) against the actual copy numbers $X(T)$ of species $S$, in the linear propensity example with $T=20$. 
    %The experiment is run under initial condition $z_0 = (z_{A,0}, z_{S,0})=  (0, 5)$, parameter vector $c = (1, 5, 1)$ and 
%    Filter sample size was $N_s = 1000$ and trial size was $N_r = 500$. The estimated $L^2$ error is $ 1.9432$ and with a confidence interval $[1.8076, 2.06988]$}\label{fig-linprop-s-scatter}
%\end{figure}
and performed a Bayesian estimation of parameter $c_2$. Thus we considered $c_2$ as a random variable $C_2$ with a uniform prior distribution on $[4,6]$. We randomly generated a sample of $N_r=1000$ values of $C_2$ following uniform distribution $[4, 6]$ and generated one observation trajectory $Y$ for each value of $C_2$. Then we applied Algorithm \ref{alg-overall-bayes} to compute $\bar{C}_2(t)$, the filter estimate of $\bE(C_2 \, | \sY_t)$ (see \eqref{eq-C-bar}.
%We endowed $C_2$ with a uniform (prior) distribution on the interval $[4,6]$ and estimated the posterior distribution, that is the conditional distribution, via Algorithm \ref{alg-overall-bayes}. 
The result is shown in Figure \ref{fig-linprop-c2-scatter}.

%We note that $\bE(C_2 \, | \sY_t)$, the mean of the posterior distribution, is estimated by 
%\begin{equation}
%\sE(t) = \frac{\sum_{i=1}^{N_s} C_2^{(i)} \, w^{(i)}(t)}{\sum_{i=1}^{N_s} w^{(i)}(t)}. 
%\end{equation}
%See Section \ref{sec-est-error}. 

\begin{figure}[htbp]
    \includegraphics[width=17cm]{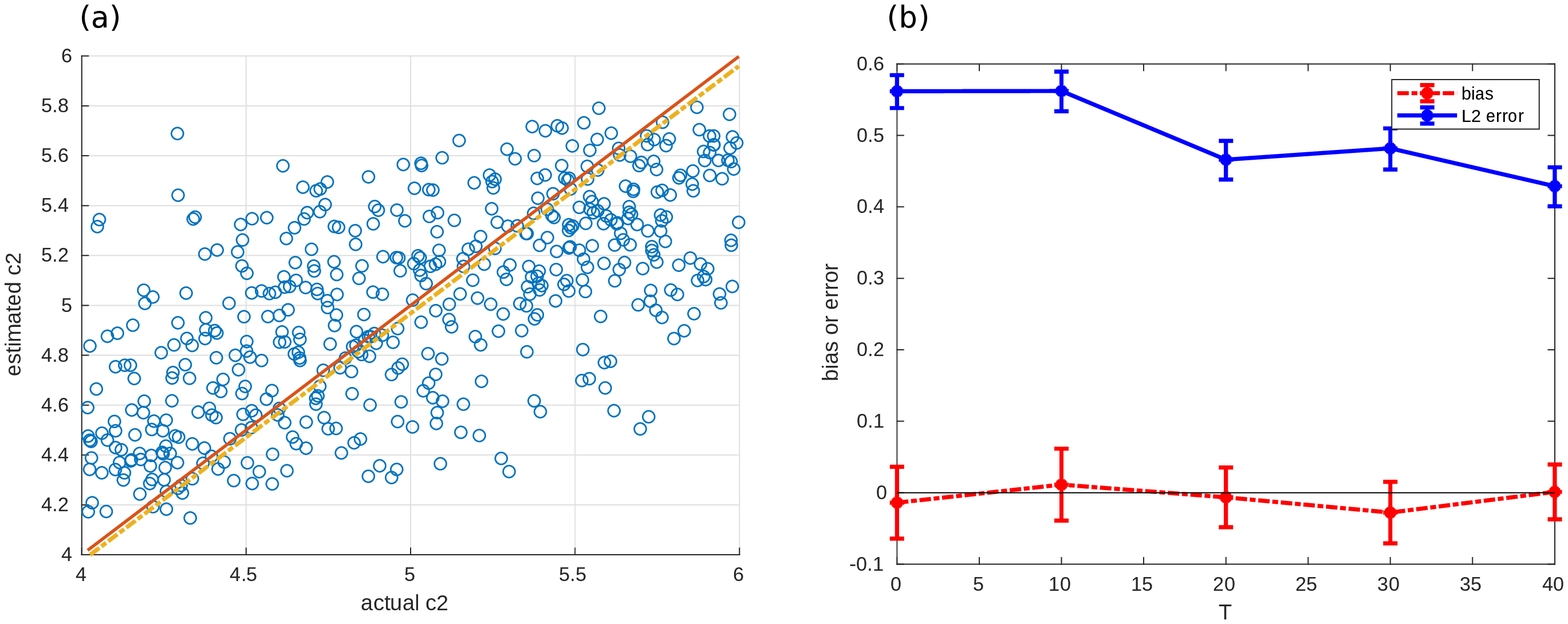}
    \caption{\label{fig-linprop-c2-scatter} (a) A scatter plot of $\bar{C}_2(40)$, the estimated value of $C_2$ (based on the observation of the copy number trajectory of species $A$ on the interval $[0,40]$), against the actual value of $C_2$ in the linear propensity example. In each trial, we randomly generated a value of $C_2$ following a uniform distribution on $[4, 6]$ and simulated the full system to generate an observation trajectory of species $A$. Then we applied Algorithm \ref{alg-overall-bayes} to compute $\bar{C}_2(40)$, where a uniform (prior) distribution on the interval $[4,6]$ was endowed. Filter sample size was $N_s = 1000$ and trial size $N_r=500$. (b) The bias $\bE[\bar{C}_2(T)-C_2]$  as well as the $L^2$ error $\bE[(\bar{C}_2(T)-C_2)^2]^{1/2}$ (along with $95\%$ confidence intervals) are plotted against $T$. Decreasing $L^2$ error with $T$ suggests that longer observations lead to better estimations.}
\end{figure}

\subsection{Genetic Circuit Example}\label{ex-circuit}
We consider a genetic transcription example \cite{rathinam2007reversible} where a protein $A$ encoded by gene $D_A$ could bind with its own gene promoter to enhance or suppress its transcription:
\begin{equation}
\begin{aligned}
D_A + A &\stackrel{c_1}{\longrightarrow} D_A',\\
D_A' &\stackrel{c_2}{\longrightarrow} D_A + A,\\
D_A &\stackrel{c_3}{\longrightarrow} D_A + A, \\
D_A' &\stackrel{c_4}{\longrightarrow} D_A' + A,\\
A &\stackrel{c_5}{\longrightarrow} \varnothing.
\end{aligned}
\end{equation}
Let $Z(t) = (\#D_A(t), \#D_A'(t),\#A(t))$ and suppose the propensity function has the mass action form $a_1(z) = c_1 z_1 z_3, a_2(z) = c_2 z_2, a_3(z) = c_3 z_1, a_4(z) = c_4 z_2, a_5(z) = c_5 z_3$. Suppose we only observe the molecule counts of protein $A$ and would like to estimate the copy number of the naked form of gene  promoter $D_A$ and the bounded form of the gene promoter $D_A'$. We ran all numerical experiments with initial condition $Z(0) = (3, 0, 15)$ and nominal parameter values $ c_1 = 0.3, c_2 = 3, c_3 = 0.5, c_4 = 0.2, c_5 =0.06$ and filter sample size $N_s=10,000$. 

The estimation of the copy number of $D_A$ at time $t$ based 
on observation of $A$ over $[0,t]$ is shown in  Figure \ref{fig-circuit-state}.
%The lower figure of Figure \ref{fig-circuit-state} shows the estimation of the copy number of $D_A$ at time $t$ based on observation of $A$ over $[0,T]$ 
%($T=2$). For this latter plot, Algorithm \ref{alg-interm-state} was used.
%We note  that Figure \ref{fig-circuit-state} 

%The scatter plot of the state estimation of $D_A$ based on $N_r = 500$ trials each with a filter sample size $N_s=1000$ is in shown in Figure \ref{fig-circ-scatter}.

\begin{figure}[htbp]
    \includegraphics[width=0.45\textwidth]{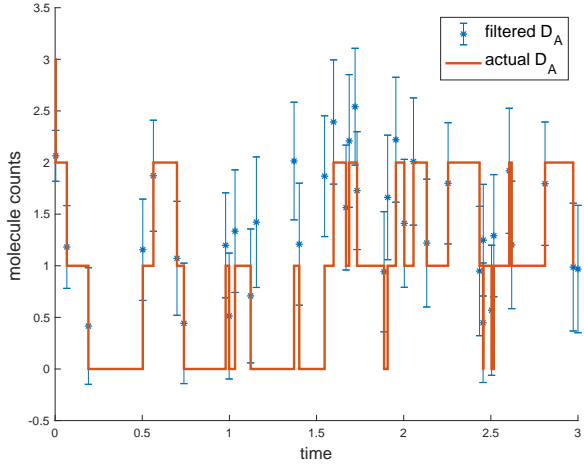}
    \caption{\label{fig-circuit-state} 
    Estimation of copy number of $D_A$ at time $t$ based on the observation of species $A$ over $[0,t]$ (genetic circuit example). Estimates are shown at jump times of the observed species $A$ along with their $68 \%$ confidence intervals.
    %Top: Estimation of copy number of $D_A$ at time $t$ based on the observation of species $A$ over $[0,t]$ (genetic circuit example). 
    %Bottom: Estimation of copy number of $D_A$ at time $t$ based on the observation of species $A$ over $[0,T=2]$ (genetic circuit example).
    %Both cases are based on the same observed trajectory and the estimates are shown at jump times $t$ of the observed species $A$ along with their $95 \%$ confidence intervals. 
    Filter sample size used was $N_s = 10,000$. }
\end{figure}
%\begin{figure}
    %\includegraphics[width=0.48\textwidth]{circuit_da_T20.eps}
    %\includegraphics[width=0.48\textwidth]{circuit_da_T50_scatter.eps}
    %\caption{Scatter plot of the point estimates $\sE(T)$ of the copy number of species $D_A$ against the actual copy numbers $X_1(T)$ of species $D_A$ in the genetic circuit example with $T=50$. A trial size of $N_r = 500$ a filter sample size $N_s=1000$ were used. The bias is estimated to be $-0.0209$ with a $95\%$ confidence interval $[-0.0705424, 0.0286593]$, the $L^2$ error is estimated to be  $0.5544$ with a $95\%$ confidence interval $[0.514124, 0.591933]$ }
%    \label{fig-circ-scatter}
%\end{figure}
We estimated one parameter at a time while fixing the other parameters at their nominal values mentioned before. We chose a uniform distribution as the prior and the posterior conditional distributions corresponding to the same single observed trajectory at several snapshots in time are shown in Figure \ref{fig-circuit-para}.
\begin{figure}[htbp]
    \includegraphics[width=17cm]{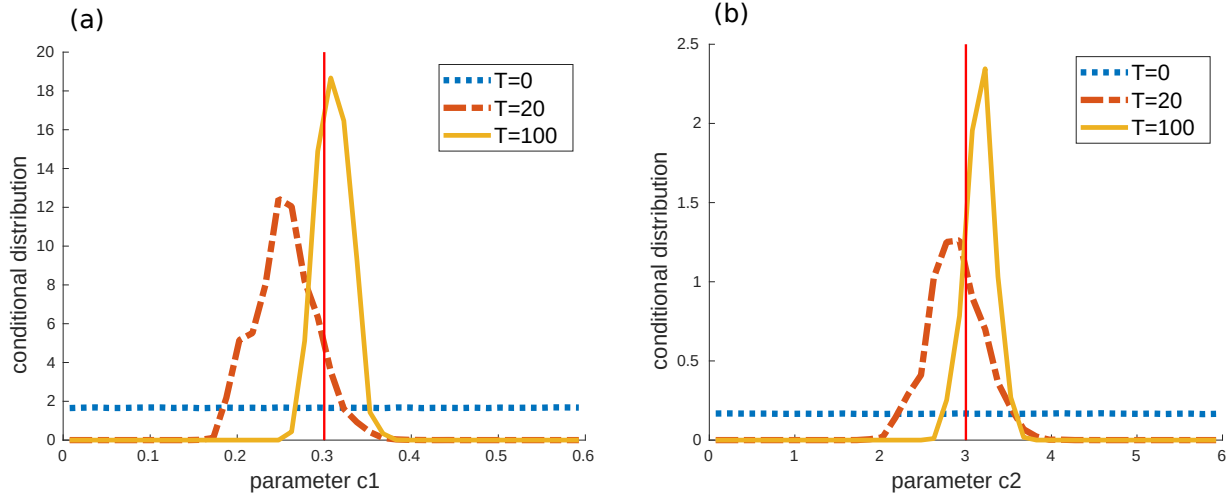}
    \caption{\label{fig-circuit-para} The posterior conditional probability density function of parameters $C_1$  and $C_2$  based on one observed trajectory of $A$ on interval $[0,T]$ in the genetic circuit example for $T=0, 20$ and $100$. The filter sample size was $N_s = 1,000,000$. $T=0$ is simply the uniform priors on $C_1$ and $C_2$. The vertical lines show the true parameter values.}
\end{figure}

\subsection{Genetic toggle switch}\label{ex-toggle}
Consider the system of genetic toggle switch\cite{gardner2000construction}
\begin{equation}
\begin{aligned}
\varnothing &\stackrel{a_1}{\longrightarrow} S_1,\\
S_1 &\stackrel{a_2}{\longrightarrow} \varnothing,\\
\varnothing &\stackrel{a_3}{\longrightarrow} S_2, \\
S_2 &\stackrel{a_4}{\longrightarrow} \varnothing.
\end{aligned}
\end{equation}

Let $Z(t) = (X(t),Y(t))=(\#S_1(t), \#S_2(t))$ and suppose the propensity functions have the form $a_1(z) = \frac{\alpha_1}{1+y^\beta}$, $a_2(z) = x$, $a_3(z) = \frac{\alpha_2}{1+x^\gamma}$, $a_4(z) = y$. Suppose we only observe species $S_2$ and we estimate the molecule counts of species $S_1$. We chose the initial condition $Z(0) = (0,0)$ and the nominal parameter values $\alpha_1 = 50, \alpha_2 = 16, \beta = 2.5 , \gamma = 1$. The estimation of the molecular counts of species $S_1$ right after each jump of $\#S_2(t)$ for a particular observed trajectory is shown in Figure \ref{fig-toggle-switch} where a filter sample size of $N_s=10,000$ was used.

%The scatter plot of the state estimation of $S_1$ based on $N_r = 500$ trials each with a filter sample size $N_s=1000$ is shown in Figure \ref{fig-toggle-s1-scatter}. Note that at one of the bistable state $(X, Y) = (50, 0)$, with the birth and death rate of species $S_1$ being $a_1(z) = 50, a_2(z) = 50$ and observable reaction rates $a_3(z) = 16/51, a_4(z) = 0$, the observable reactions that changes the amount of species $S_1$ happens much more rapidly than the observable reactions. In between two observed jump, the posterior expectation of copy number of $S_1$ given $\# S_2 = 0$ is close to 50, which is to the best of our knowledge of the available observation. 

As in the genetic circuit example, we estimated one parameter at a time while fixing the other parameter at its nominal value mentioned before. We chose a uniform distribution as the prior and the posterior conditional distributions computed based on the same single observed trajectory at several snapshots in time are shown in Figure \ref{fig-toggle-para}.

Figure \ref{fig-toggle-para} suggests that the inference of $\alpha_2$ is much better than the inference of $\alpha_1$. To understand why, we note that in the toggle switch example, the trajectories of $\#S_1$ and $\#S_2$ 
exhibit a switching pattern that switches between two modes. In one mode, $\#S_1$ fluctuates around $\alpha_1$ while $\#S_2$ is nearly zero, and in the other, $\#S_2$ fluctuates around $\alpha_2$ while 
$\#S_1$ is nearly zero. The influence of $\alpha_1$ on the behavior of $S_2$ only enters indirectly, and we conjecture by affecting the switching frequency. 
If this were to be the case, one may need a long observation trajectory, perhaps beyond $T=10,000$ to make a good estimate of $\alpha_1$. Longer time duration $T$ may require larger sample size $N_s$ 
making the computations more tedious. 

In order to test our hypothesis regarding the switching frequency, after some trial and error, we chose the parameter values $\alpha_1=20$ and $\alpha=9$ and kept the other two parameters the same. Figure \ref{fig-traj-toggle} shows a comparison of trajectories of the system with the two different sets of parameters.
We repeated the parameter inference experiment with this new choice of $\alpha_1$ and $\alpha_2$ and the results are shown in Figure \ref{fig-toggle-para-2} and show that $\alpha_1$ is predicted better.    

%Figure \ref{fig-toggle-para} shows the posterior conditional probability density function of parameters $\alpha_1$ and $\alpha_2$ estimated based on a single observation where the actual parameter values were the nominal values. Uniform priors were assumed for estimation of both parameters. 

\begin{figure}[htbp]
    \includegraphics[width=0.7\textwidth]{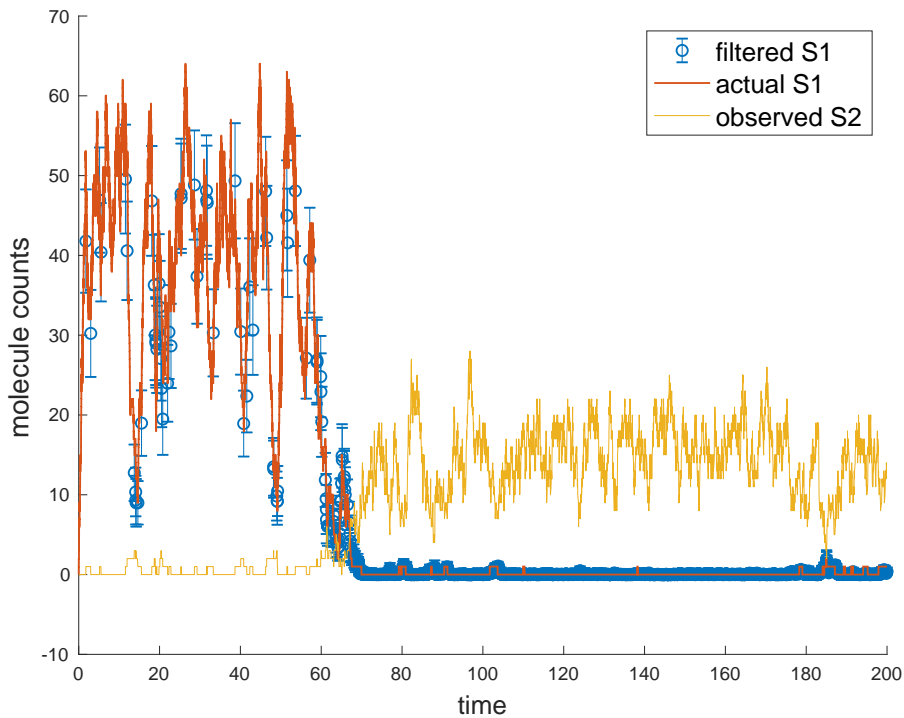}
    \caption{\label{fig-toggle-switch} Estimation of copy number of $S_1$ at time $t$ in the genetic toggle switch example based on one observed trajectory of $S_2$ over $[0,t]$. Estimates are shown at jump times of the observed species $S_2$ along with their $68 \%$ confidence intervals. Filter sample size used was $N_s = 10,000$.}
\end{figure}

%\begin{figure}
    %\includegraphics[width=0.48\textwidth]{toggle-S2-on.eps}
    %\includegraphics[width=0.48\textwidth]{toggle-S2-off.eps}
    %\caption{\label{fig-toggle-on-off} Estimation of the molecule counts of $S_1$ when $S_2$ is on or off in the toggle switch example. The experiment takes initial condition $Z(0) = (0,0)$ and parameter $\alpha_1 = 50, \alpha_2 = 16, \beta = 2.5 , \gamma = 1$, $N_s = 10,000$, $T = 30$. }
%\end{figure}

%\begin{figure}
    %\includegraphics[width=0.48\textwidth]{toggle_s1_T20.eps}
    %\caption{
    %Scatter plot of the point estimates $\sE(T)$ of the copy number of species $S_1$ against the actual copy numbers $X_1(T)$ of species $S_1$ in the genetic toggle example with $T=20$. A trial size of $N_r = 500$ a filter sample size $N_s=1000$ were used.}\label{fig-toggle-s1-scatter}
%\end{figure}

\begin{figure}[htpb]
    \includegraphics[width=16cm]{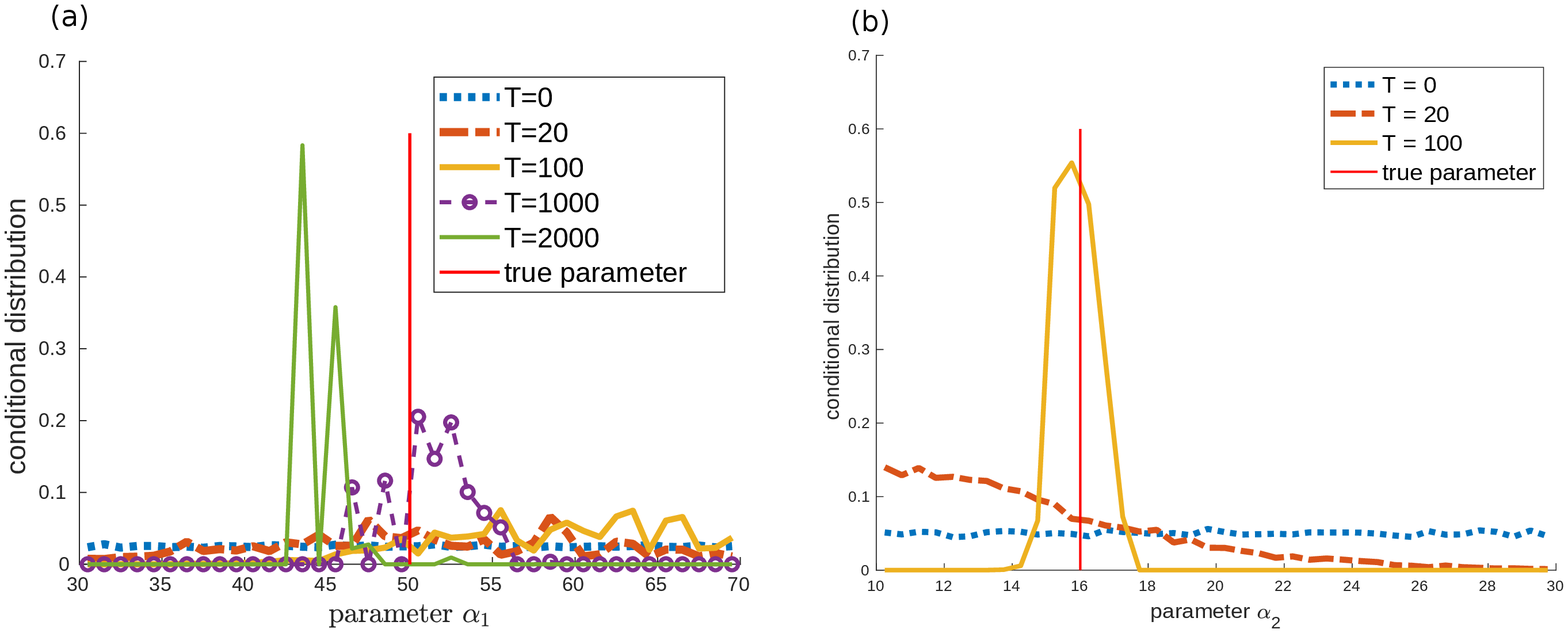}
    \caption{\label{fig-toggle-para} The posterior conditional probability density function of parameters $\alpha_1$ (left) and $\alpha_2$ (right) in the genetic toggle example based on the observation of the trajectory of $\#S_2$ over $[0,T]$. The nominal parameter values were $\alpha_1 = 50, \alpha_2 = 16, \beta = 2.5 , \gamma = 1$. Estimates are shown at times $T=0, 20, 100, 1000$ and $2000$ on the left and $T=0, 20$ and $100$ on the right. The filter sample size was $N_s = 10,000$. Note that $T=0$ corresponds to the uniform priors. Two different observation trajectories, one for the estimation of $\alpha_1$ and the other for $\alpha_2$ were used. }
\end{figure}

\begin{figure}[htbp]
    \includegraphics[width=17cm]{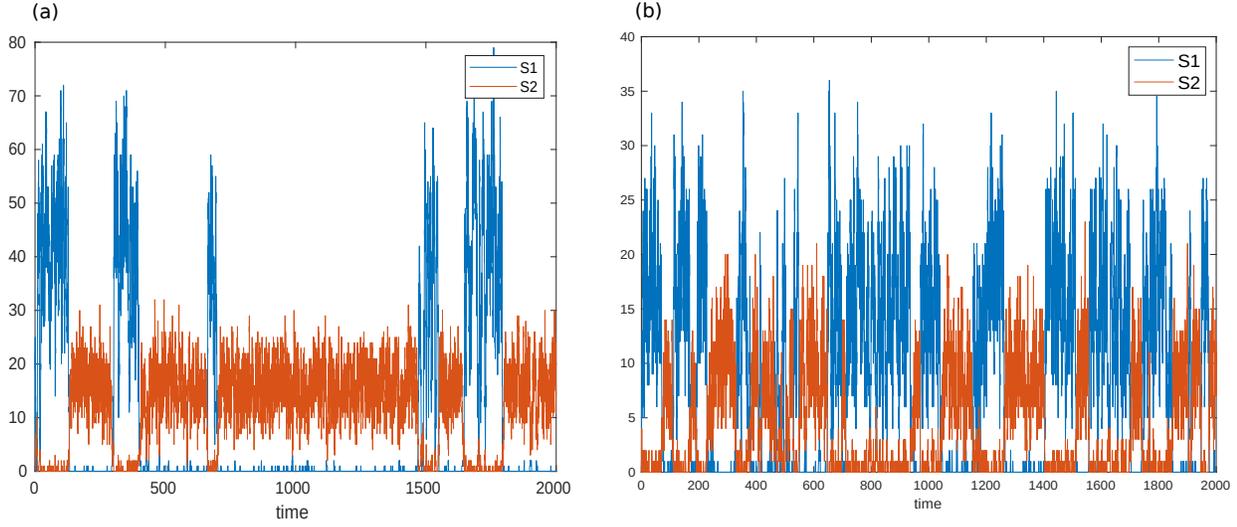}
    \caption{Trajectories of the genetic toggle switch system under different parameters. The figure on the left has less frequent switching, which corresponds to $\alpha_1 = 50, \alpha_2 = 16, \beta = 2.5 , \gamma = 1$, the figure on the right corresponds to  $\alpha_1 = 20, \alpha_2 = 9, \beta = 2.5 , \gamma = 1$.}
    \label{fig-traj-toggle}
\end{figure}

\begin{figure}[htbp]
    \includegraphics[width=16cm]{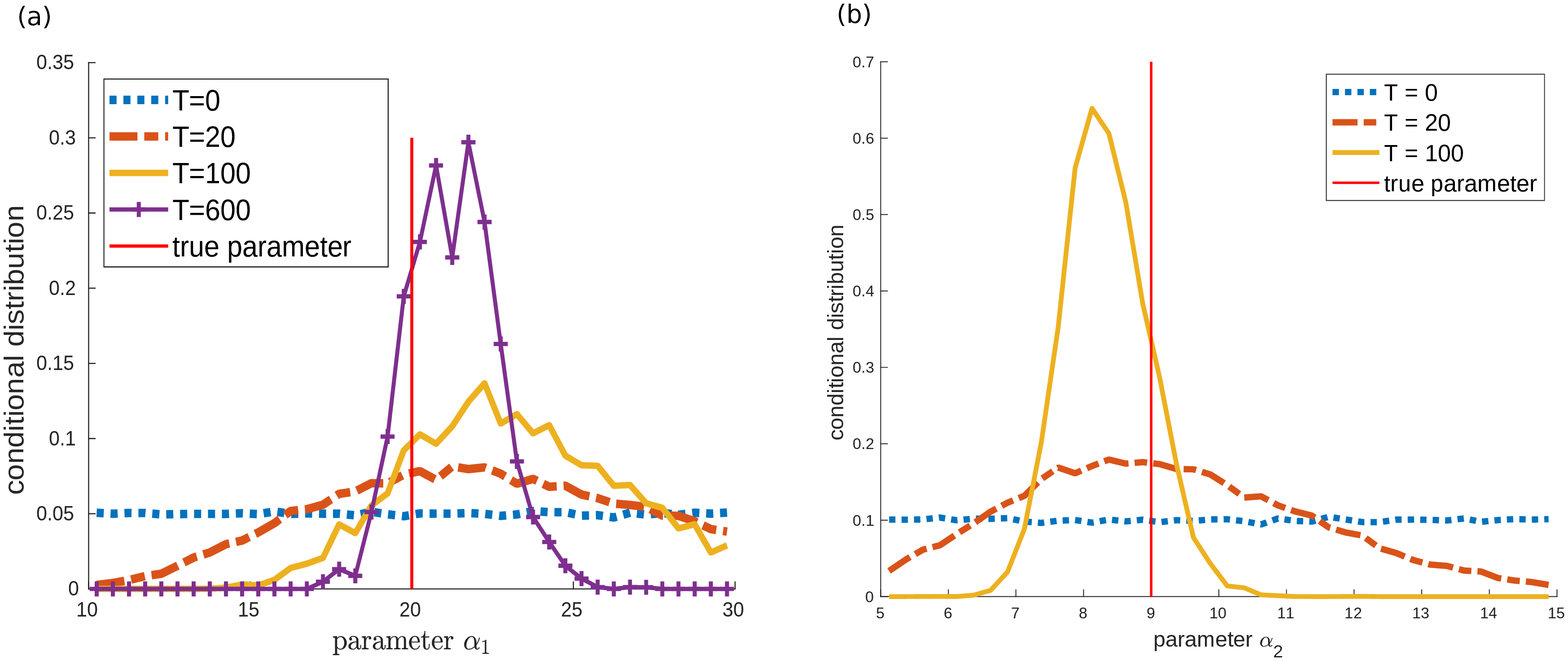}
    \caption{\label{fig-toggle-para-2} The posterior conditional probability density function of parameters $\alpha_1$ (left) and $\alpha_2$ (right) in the genetic toggle example based on the observation of the trajectory of $\#S_2$ over $[0,T]$. The nominal parameter values were $\alpha_1 = 20, \alpha_2 = 9, \beta = 2.5 , \gamma = 1$. Estimates are shown at times $T=0,20,100$ and $600$ on the left and $T=0, 20$ and $100$ on the right. The filter sample size was $N_s = 100,0000$. Note that $T=0$ corresponds to the uniform priors.}
\end{figure}

\subsection{SEIR model}\label{ex-seir}
Consider the SEIR model for the spread of infectious diseases
\begin{equation}
\begin{aligned}
S+I &\stackrel{\beta}{\longrightarrow} E + I,\\
E &\stackrel{\kappa}{\longrightarrow} I,\\
I &\stackrel{\gamma}{\longrightarrow} R.\\
\end{aligned}
\end{equation}
Let $Z(t) = (\#S(t), \#E(t), \#R(t), \#I(t))$ 
where $S,E,I$ and $R$ stand for susceptible, exposed, infected and recovered respectively. We assume that we could observe the infectious population $I$ exactly. The propensity functions $a_j(\cdot)$ have the form $a_1(z) = \beta z_1 z_4/N$, $a_2(z) = \kappa z_2$, $a_3(z) = \gamma z_4$, where $N = z_1 + z_2 + z_3 + z_4$ is the total population. We ran all numerical experiments with initial condition $Z(0) = (s_0, e_0, r_0, i_0) = (500, 20, 0, 5)$ and parameter values $\beta = 0.05, \kappa = 0.2, \gamma = 0.05$. 

The estimation of the susceptible and exposed population after each change of $Z_4(t)=\#I(t)$ for a particular observed trajectory of $I$ is shown in Figure \ref{fig-seir-traj} where a filter sample size of $N_s=10,000$ was used. Figure \ref{fig-seir-s-scatter} shows scatter plots corresponding to state estimation of susceptible and exposed number of individuals. 

Keeping other parameters at their nominal values, we explored Bayesian inference of parameter $\kappa$ (the reciprocal of the incubation period) based on a uniform prior. The scatter plot of the parameter estimation of $\kappa$ based on $N_r = 1000$ trials each with a filter sample size $N_s=1000$ is in shown in Figure \ref{fig-seir-scatter}.

\begin{figure}[htbp]
    \includegraphics[width=17cm]{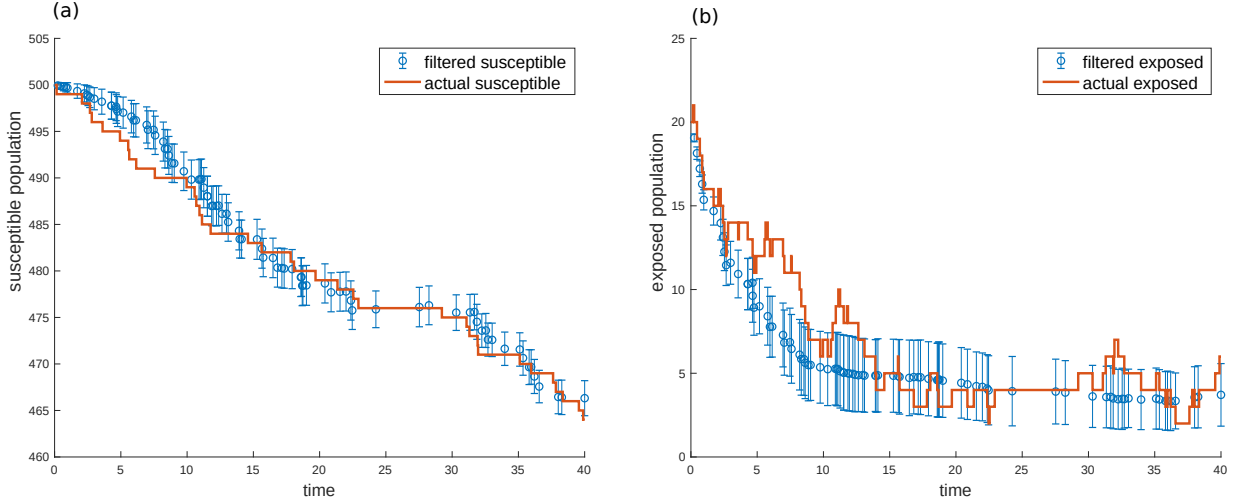}
    \caption{\label{fig-seir-traj}
    Estimation of the susceptible population (left) and the exposed population (right) at time $t$ in the SEIR example given one trajectory of the infected population over $[0,t]$. Estimates are shown at times of observed new infections or recovery along with the $68 \%$ confidence intervals. Filter sample size used was $N_s = 10,000$.}
\end{figure}
\begin{figure}[htbp]
    \includegraphics[width=17cm]{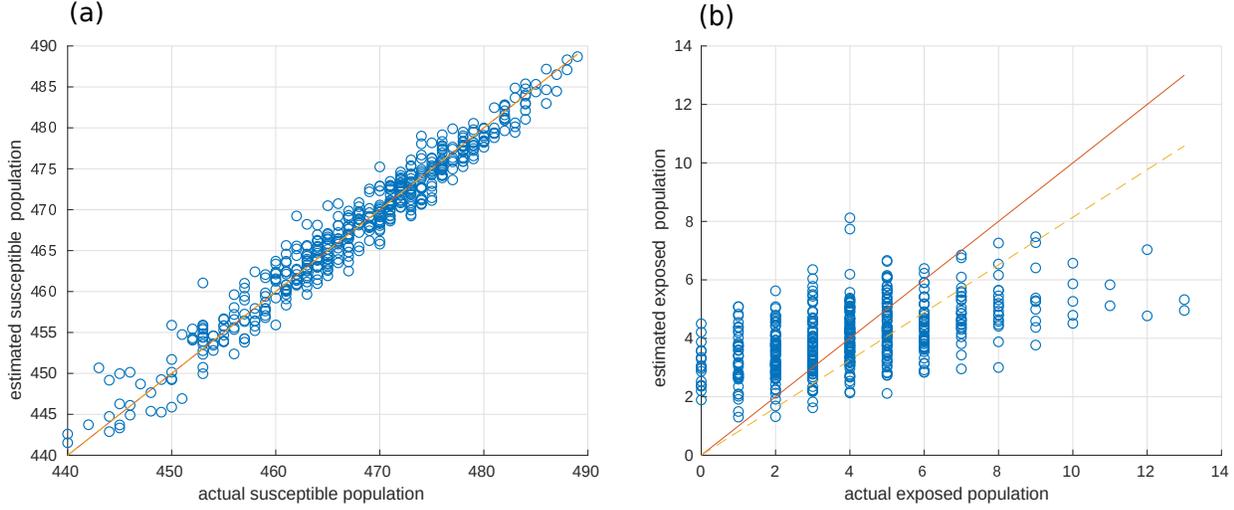}
    \caption{Estimate of the susceptible (a) and exposed (b) population in SEIR example at time $T=40$ given observations of the infected population over $[0,T]$. The filter sample size was $N_s = 1000$, and trial size $N_r = 500$. For the susceptible population, the bias is  $0.0156$ within a $95\%$ confidence interval $[-0.160955, 0.192202]$ and $L_2$ error is $1.9723$ within $[1.81684, 2.11637]$. For the exposed population, the bias is  $-0.0156$ within a $95\%$ confidence interval $[-0.192202, 0.160955]$ and $L_2$ error is $1.9723$ within $[1.81684, 2.11637]$.
    }\label{fig-seir-s-scatter}
\end{figure}
\begin{figure}[htbp]
    \includegraphics[width=17cm]{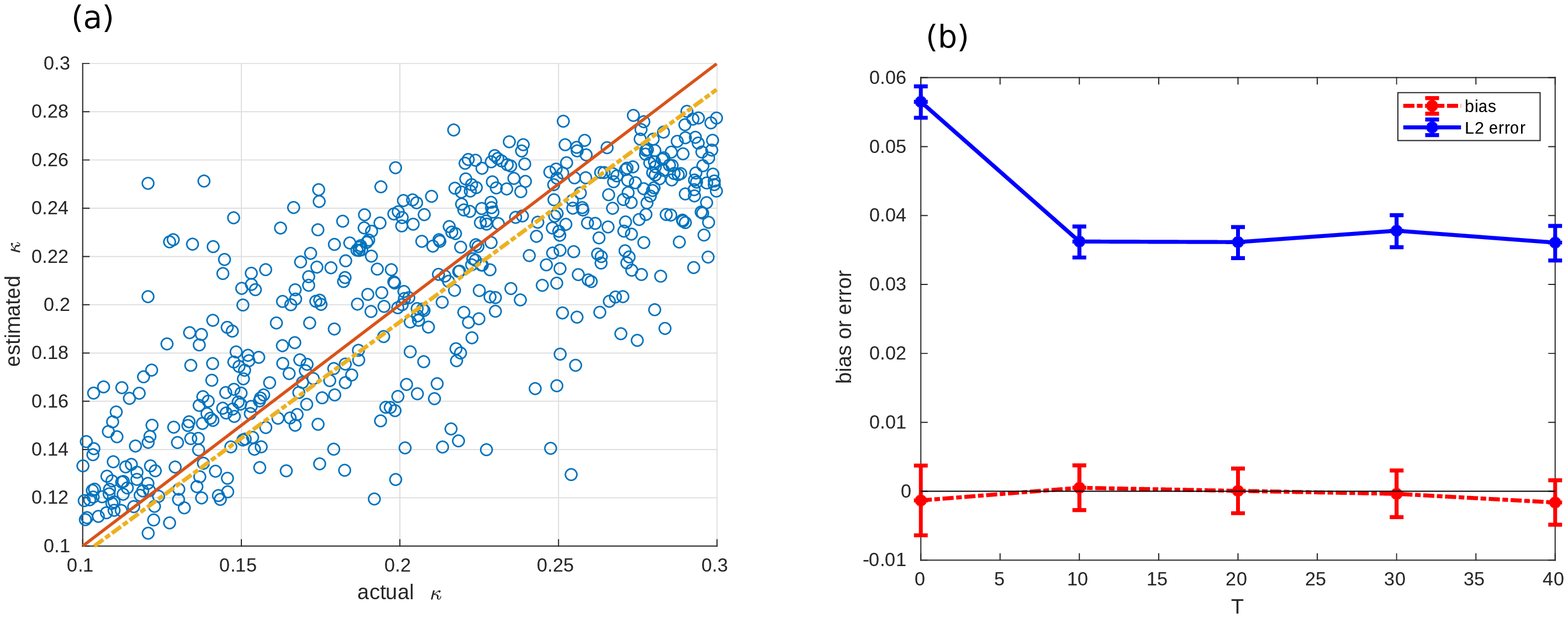}
    \caption{\label{fig-seir-scatter}
    (a) A scatter plot of $\hat{\kappa}(T)$, the estimated value of $\kappa$ against the actual value of $\kappa$. Here $T=40$. 
    Filter sample size was $N_s = 1000$ and trial size $N_r=500$. (b) The the bias $\bE[\hat{\kappa}(T)-\kappa]$ as well as the  $L^2$ error $\bE[(\hat{\kappa}(T)-\kappa)^2]^{1/2}$ (along with $95\%$ confidence intervals) are plotted against $T$.}
\end{figure}

%\begin{figure}
    %\includegraphics[width=0.5\textwidth]{seir-kappa-traj.eps}
%    \includegraphics[width=0.5\textwidth]{seir-kappa-cpdf.eps}
%    \caption{\label{fig-seir-kappa-cpmf} Estimation of parameter $\kappa$ related to incubation period in SEIR model $T=40$ }
%\end{figure}
Next we explore estimation of the state at a past time $t_0$ based on observation up to a future time $T$. That is $0 \leq t_0 \leq T$. The case of $t_0=0$ corresponds to the situation where the initial state itself is not known, but we only have a prior distribution for it. 
We assumed that the exposed population at initial time $t=0$ followed a binomial distribution, with parameters $N=520$ and $p=0.04$. Since $\#I(0)=5$ and the total population is $525$, this left us with $520$ individuals. We assumed a probability $p=0.04$ of being exposed, which lead us to choose the binomial distribution. Algorithm \ref{alg-interm-state} was used to estimate the conditional probability mass function 
\[
P\{X_2(t_0)=x \, | \, Y(s) = y(s), \, 0 \leq s \leq T\}
\]
for various $t_0$ and $T$. The results are shown in  Figure \ref{fig-seir-init}.
\begin{figure}[htbp]
    \includegraphics[width=17cm]{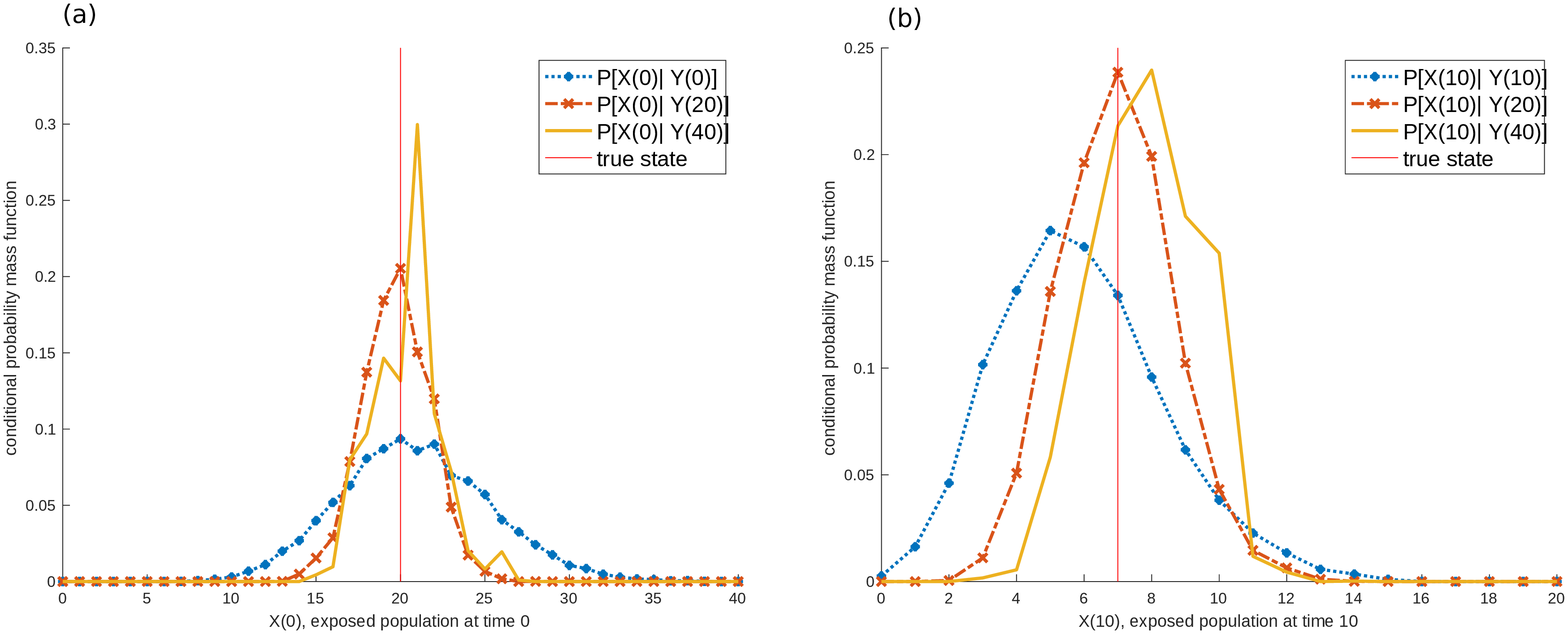}
    \caption{\label{fig-seir-init}
    (a) Conditional distribution of the exposed population at the initial time $X_2(0) | \mathcal{Y}_T$ for $T=0,20$ and $40$. (b) Conditional distribution of the exposed population at an intermediate time $X_2(10) | \mathcal{Y}_T$ for $T=10, 20$ and $40$. Filter sample size was $N_s = 10,000$. The prior of the initial distribution of the exposed population follows a binomial distribution $B(520, 0.04)$.} 
\end{figure}
%\begin{figure}
%    \includegraphics[width=0.48\textwidth]{seir_s_10.png}
%    \includegraphics[width=0.48\textwidth]{seir_s_20.png}
%    \includegraphics[width=0.48\textwidth]{seir_s_30.png}
%    \includegraphics[width=0.48\textwidth]{seir_s_40.png}
%    \caption{\label{fig-seir-cpmf} Conditional probability mass function of the susceptible population at $T=10, 20, 30, 40$. The test was run under the initial condition $Z(0) = (s_0, e_0, r_0, i_0) = (500, 40, 0, 5)$ and parameter $\beta = 0.05, \kappa = 0.2, \gamma = 0.05$, $N_s = 10000$. The confidence interval is the mean plus or minus one standard deviation.}
%\end{figure}

\subsection{Investigation of resampling}\label{sec-resampling}

Here we investigate the strategies of resampling in our filter. In the numerical examples presented so far, we applied the offspring algorithm to resample right after each jump $t_k$. 
Here we present the results of three different resampling strategies. One was to resample at each jump time $t_k$ as we did earlier. 
As an alternative, we applied an adaptive resampling method where resampling is applied at $t_k$ if more than 10 particles had zero weights or if the ratio of the maximum weight to the minimum nonzero weight was greater than 1000. The third alternative was not to resample at all. We note that, whenever resampling was not used, the weights were normalized instead so that the average weight was 1 right after any jump time $t_k$.

Table \ref{tab:rsmpl} shows the $L^2$ error as well as the bias in estimating the conditional mean of the state 
or a parameter along with the $95\%$ confidence intervals, for the various examples considered earlier. The results are for state estimation unless a parameter is mentioned in the first column. The entrees {\tt NaN} indicate numerical issues in {\tt MATLAB} mainly due to the fact that all weights became either zero or infinity within numerical precision. 
These results show that the three approaches yield more or less the same accuracy (the estimated errors are within the confidence intervals of each other) except in the case of the genetic circuit example where serious numerical issues manifest without resampling. We suspect that in the other examples, if final time $T$ is increased, the need for resampling will become evident. 

\begin{table*}
\caption{\label{tab:rsmpl} The bias and error of estimation of the states or parameters when different resampling schemes were applied.}
%\begin{ruledtabular}
\begin{tabular}{c c c c c c}
\hline
System & Method & $L^2$ error & Confidence interval & bias & Condidence interval \\
\hline
linear propensity & each jump & $9.7470$ & $[9.0677, 10.3819]$ & $-0.3346$ & $ [-1.2068, 0.5375]$\\
linear propensity & adaptive $(0\%)$ & $ 10.0216$ & $[9.3373, 10.6621]$ & $-0.0699$ & $[-0.9671, 0.8273]$\\
linear propensity & never & $ 10.0783$ & $[9.4464, 10.6728]$ & $-0.4910$ & $[-1.3922, 0.4103]$\\
genetic circuit & each jump &  $0.5638$ & $[0.5241, 0.6009]$ & $0.0100$ & $[-0.0405, 0.0604]$\\
genetic circuit & adaptive $(75.85\%)$ &  $0.5456$ & $[0.5066, 0.5820]$ & $-0.0085$ & $[-0.0574, 0.0403]$\\
genetic circuit & never &  NaN & NaN & NaN & NaN\\
genetic toggle & each jump &  $5.5083$ & $[5.0655, 5.9181]$ & $0.3701$ & $[-0.1219, 0.8622]$\\
genetic toggle & adaptive $(2.20\%)$ &  $5.5307$ & $[5.0344, 5.9860]$ & $0.2406
$ & $[-0.2541, 0.7353]$\\
genetic toggle & never &  $5.4175$ & $[4.9015, 5.8886]$ & $-0.8134$ & $[-1.2930, -0.3339]$\\
SEIR & each jump &  $2.0853$ & $[1.9323, 2.2279]$ & $ -0.0176$ & $[-0.2043, 0.1691]$\\
SEIR & adaptive $(9.91\%)$ &  $2.3726$ & $[2.2102, 2.5245]$ & $0.1238$ & $[-0.0884, 0.3359]$\\
SEIR & never &  $2.0823$ & $[1.9572, 2.2003]$ & $-0.0029$ & $[-0.1893, 0.1835]$\\
linear propensity $c_2$ & each jump & $0.4099$ & $[0.3850, 0.4334]$ & $0.0233$ & $ [-0.0134, 0.0599]$\\
linear propensity $c_2$& adaptive $(23.00\%)$ & $  0.4237$ & $[0.3977, 0.4481]$ & $-0.0295$ & $[-0.0674, 0.0083]$\\
linear propensity $c_2$ & never & $0.5380$ & $[0.5022, 0.5716]$ & $0.0628$ & $[0.0149, 0.1106]$\\
SEIR $\kappa$ & each jump &  $0.0377$ & $[0.0354, 0.0398]$ & $0.0002$ & $[-0.003, 0.004]$\\
SEIR $\kappa$ & adaptive $(20.61\%)$ &  $0.0351$ & $[0.0329, 0.0372]$ & $-0.003$ & $[-0.004, 0.0028]$\\
SEIR $\kappa$ & never &  $0.0371$ & $[0.0343, 0.0396]$ & $-0.0064$ & $[-0.0097, -0.003]$\\

\hline
\end{tabular}
%\end{ruledtabular}
\end{table*}

\section{Conclusions and future work} \label{sec-conclusions}
We presented a novel filtering algorithm to compute 
the conditional probability mass function
\[
\pi(t,x) = P\{X(t)=x \, | \, Y(s)=y(s), 0 \leq s \leq t\},
\]
in the context of reaction networks. Here $Y(t)=y(t)$ is the vector copy number of a subset of species that are assumed to be observed exactly in continuous time and $X(t)$ is the vector copy number of the remaining unobserved species. We also showed how this algorithm 
can be adapted for the purposes of Bayesian parameter estimation based on exact partial state observation. Furthermore, we also showed how the state $X(t_0)$ at time $t_0$ can be estimated based on observations up to a later time $T$ where $0 \leq t_0 \leq T$. 

The filtering algorithm involves a weighted Monte Carlo method and a resampling strategy needs to be employed. We explored some possibilities for resampling at the observed jump times $t_k$. Our investigations in this regards were numerical and based on relatively simple examples, and hence not exhaustive. An investigation of adaptive resampling based on some theoretical analysis is the subject of future work. 

While we presented an intuitive derivation of the filtering equations, our derivation is not mathematically rigorous. 
\cite{confortola2013filtering} provides a rigorous derivation in the context of finite state Markov processes which in the case of reaction networks correspond to systems where species conservation relations limit the species counts to be bounded. A rigorous derivation is certainly possible for the case of unbounded species copy numbers provided certain integrability or moment bound conditions hold. In this context conditions in Refs.~ \cite{rathinam2015Quarterly, gupta2014scalable, engblom2014} will be relevant.  

As is well known, many intracellular reaction networks may have some species in greater abundance and a discrete state model can be tedious to simulate one event at a time. 
Tau-leap methods \cite{gillespie2001approximate} as well as model reduction  approaches have been proposed for efficient simulation of such systems. See \cite{gillespie2007stochastic} and references therein for several methods as well as \cite{ball2006asymptotic,kang2013separation, hepp2015adaptive, ganguly2015jump, li2007analysis,  anderson2011error, rathinam2016convergence} for rigorous  mathematical analysis of methods. These same considerations could be applied to the filtering method proposed here to develop reduced order models 
and tau-leap simulations. 

Our assumption of exact (noiseless) observation of some species may appear unrealistic. However, as mentioned in 
the introduction, most observation noise may be modeled 
via extra reactions and extra species such as photons. 
If the photon counts are very large, the same considerations of reduced order models or tau-leaping mentioned above apply. 
Less realistic is the assumption of continuous in time observations. In reality, observations are recorded 
in discrete time snapshots. If the frequency of the snapshots is very high, then the theory of continuous in time observations provides a good approximation. If the frequency is low, then this is not the case.   
Future work will involve the case where 
observations of some species are made at certain time snapshots as well as the limiting behavior as the 
time snapshots increase in frequency. Either way, the theory and the algorithm discussed in this paper provides a baseline for the exploration of observations at discrete time snapshots.

\vspace{1em}

{\bf Acknowledgments:}
We thank Ankit Gupta for introducing us to the rich and subtle topic of stochastic filtering.

\appendix
\section{Derivation of the evolution equations for $\pi(t,x)$}
We note that \cite{confortola2013filtering} provides a rigorous derivation
of the evolution equation for the conditional probability $\pi(t,z)$ when the state space is finite
and the exact observation is of the form $y=h(z)$ where $h$ is a function of
the state space. The derivation in Ref.~ \cite{confortola2013filtering} may not
be easily accessible to applied scientists who may not be familiar with the
language of stochastic analysis. Moreover, our filtering equations (while in agreement with Ref.~ \cite{confortola2013filtering}) 
are somewhat simpler in appearance since in our case   
$h$ corresponds to the projection onto the last $n_2$ components of the state and also due to the
structure of the reaction network. 
The derivation shown here is more intuitive to follow (at the expense of some
rigor) and results in
equations consistent with \cite{confortola2013filtering}. Moreover, we shall
not make the assumption that the state space is finite. We believe that the
rigorous derivation in Ref.~ \cite{confortola2013filtering} can be extended to
infinite state space under reasonable assumptions, but such an endeavor
is beyond the scope of this paper. 

We start with a discretization of the time interval $[0,\infty)$ 
by a mesh
\[
\{\ell h \, | \ell = 0,1,\dots\}
\]
of spacing $h$. We consider $(X(\ell h),Y(\ell h))$ as a discrete time Markov
chain. If the observed trajectory of $Y(t)$ is given by $y(t)$, then
the observations on the mesh points will be given by
\[
\bar{y}_\ell = y(\ell h), \quad \ell=0,1,\dots
\]
We may use the filtering equations for a partially observed
discrete time Markov chain derived in \cite{fristedt2007filtering}.
For the discrete time Markov chain $(X(\ell h),Y(\ell h))$ (for $\ell=0,1,\dots$) where $Y$ is observed 
exactly, the conditional probability
\[
\pi_\ell(x) = P(X(\ell h) =x \, | \, Y(j h) = \bar{y}_{j} \, j=0,\dots,n),
\]
is shown in \cite{fristedt2007filtering} to satisfy 
\begin{equation}\label{eq-disc-pi}
  \pi_\ell(x') = \frac{\sum_x \kappa((x,\bar{y}_{\ell-1}),(x',\bar{y}_\ell)) \pi_{\ell-1}(x)}{\sum_{x_\ell}
    \sum_{x_{\ell-1}} \kappa((x_{\ell-1},\bar{y}_{\ell-1}),(x_\ell,\bar{y}_\ell)) \pi_{\ell-1}(x_{\ell-1})},
\end{equation}
where
\[
\kappa((x,y),(x',y')) = P(X(\ell h)=x', Y(\ell h)=y' \, | \,
X((\ell-1)h)=x,Y((\ell-1)h)=y).
\]
From the infinitesimal characteristics of continuous time
Markov chains, as $h$ approaches $0+$, 
\begin{equation}\label{eq-kappa-inf}
\begin{aligned}    
\kappa((x,y),(x',y')) &= a_j(x,y) h + o(h) \quad (x',y')=(x,y)+\nu_j,\\
\kappa((x,y),(x,y)) &= 1 - h \sum_{j=1}^m a_j(x,y) + o(h),\\
\kappa((x,y),(x',y')) &= o(h) \quad \text{ otherwise.}
\end{aligned}
\end{equation}
Let's consider two adjacent mesh points $t$ and $t+h$.  There are two
possibilities; there are no jumps of $y$ on the interval $(t,t+h]$ or
there are jumps. Again from the infinitesimal characteristics of continuous time
Markov chains, as $h$ approaches $0+$, there is either no jump or one jump
during $(t,t+h]$. We approximate $\pi(t,x)$ by $\pi_{\ell-1}(x)$
and $\pi(t+h,x)$ by $\pi_\ell(x)$. For the case when there is no jump during
$(t,t+h]$, if we suppose $t_k \leq t < t+h < t_{k+1}$, then
\[
\bar{y}_\ell=\bar{y}_{\ell-1} = y(t_k).
\]
Using \eqref{eq-disc-pi} and
\eqref{eq-kappa-inf}, we obtain
%\[
%\pi(t+h,x) = \frac{\sum_{j \in \sU} \pi(t,x-\nu_j') a_j(x-\nu_j',y(t_k)) h +
%\pi(t,x) - \pi(t,x) \sum_{j=1}^m a_j(x,y(t_k)) h + o(h)}{\sum_{\tilde{x}}
%  \left(1 - \sum_{j \in \sO} a_j(\tilde{x},y(t_k)) h\right) \pi(t,\tilde{x})+o(h)}
%\]
$\pi(t+h,x)$ as a ratio where
the numerator is
\[
\sum_{j \in \sU} \pi(t,x-\nu_j') a_j(x-\nu_j',y(t_k)) h +
\pi(t,x) - \pi(t,x) \sum_{j=1}^m a_j(x,y(t_k)) h + o(h)
\]
and the denominator is
\[
\sum_{\tilde{x}}
  \left(1 - \sum_{j \in \sO} a_j(\tilde{x},y(t_k)) h\right) \pi(t,\tilde{x})+o(h).
\]
From the above, we may obtain an expression for
$(\pi(t+h,x)-\pi(t,x))/h$ which upon taking limit as $h \to 0+$ yields 
\eqref{eq-pi-deriv}. The second case is when $t < t_k < t+h$ and in this case
\[
\bar{y}_{\ell-1} = y(t_{k-1}) \neq y(t_{k}) = \bar{y}_{\ell}.  
\]
Using \eqref{eq-disc-pi} and
\eqref{eq-kappa-inf}, we obtain
\[
\pi(t+h,x) = \frac{\sum_{l \in \sO_k}a_l(x-\nu'_l,y(t_{k-1})) \,
  \pi(t,x-\nu'_l) \, h + o(h)}{\sum_{\tilde{x}} \sum_{l \in \sO_k}
  a_l(\tilde{x}-\nu'_l,y(t_{k-1})) \, \pi(t,\tilde{x}) \, h + o(h)}.
\]
Noting that $\pi(t+h,x) \to \pi(t_k,x)$  and $\pi(t,x) \to
\pi(t_k-,x)$ as $h \to 0+$, we obtain  
\eqref{eq-pi-jump}. 

We note that for a rigorous treatment one needs to use the language of measure theory, since unlike in the discrete time Markov chain case, in the continuous time case we are conditioning on a zero probability event of observing $Y(s)=y(s)$ for $0 \leq s \leq t$. Moreover, a rigorous and mathematically ``cleaner'' treatment involves working with the integral representation of the differential equation with jumps. The derivation provided here, we hope, provides the ``essence'' of the idea. 

\section{Unnormalized and normalized filtering equations}\label{append-rho-pi}
We show that if $\rho(t,x)$ satisfies the unnormalized filtering equations, 
then $\pi(t,x)=\rho(t,x)/\sum_{\tilde{x}}{\rho}(t,\tilde{x})$. We assume the existence and uniqueness of solutions of both the unnormalized and normalized filtering equations. 

To that end, suppose $\rho$ solves the unnormalized filtering equations \eqref{eq-rho-deriv} and \eqref{eq-rho-jump}, and let $\tilde{\pi}(t,x)=\rho(t,x)/\sum_{\tilde{x}}{\rho}(t,\tilde{x})$. It is adequate to show that $\tilde{\pi}(t,x)$ satisfies the filtering equations \eqref{eq-pi-deriv} and \eqref{eq-pi-jump}. 

In between jump times, that is, for $t_k \leq t < t_{k+1}$, $\rho$ satisfies
\begin{equation*}
\begin{aligned}  
\rho'(t,x) &= \sum_{j \in \sU} \rho(t,x-\nu_j')\,a_j(x-\nu_j',y(t_k)) - \sum_{j \in
  \sU} \rho(t,x)\,a_j(x,y(t_k))\\
&- \rho(t,x)\, a^{\sO}(x,y(t_k)) \quad \forall x \in \posint^{n_1}.
\end{aligned}
\end{equation*}
Then
\begin{equation*}
\begin{split}
    \tilde{\pi}'(t,x) = \frac{ \rho'(t,x)} {\sum_{\tilde{x}}{\rho}(t,\tilde{x}) }- \rho(t,x) \frac{\sum_{\tilde{x}}{\rho'}(t,\tilde{x})}{(\sum_{\tilde{x}}{\rho}(t,\tilde{x}))^2}.
\end{split}
\end{equation*}
The first term may be written as
\begin{equation*}
    \frac{ \rho'(t,x)} {\sum_{\tilde{x}}{\rho}(t,\tilde{x}) } = \sum_{j \in \sU} \tilde{\pi}(t,x-\nu_j')\,a_j(x-\nu_j',y(t_k)) - \sum_{j \in
  \sU} \tilde{\pi}(t,x)\,a_j(x,y(t_k)) - \tilde{\pi}(t,x)\, a^{\sO}(x,y(t_k)).
\end{equation*}
The second term can be written as
\begin{equation*}
    \rho(t,x) \frac{\sum_{\tilde{x}}{\rho'}(t,\tilde{x})}{(\sum_{\tilde{x}}{\rho}(t,\tilde{x}))^2} = \pi(t,x) \frac{\sum_{\tilde{x}}{\rho'}(t,\tilde{x})}{\sum_{\tilde{x}}{\rho}(t,\tilde{x})}
\end{equation*}
Note that $\sum_{\tilde{x}}\left(\sum_{j \in \sU} \rho(t,\tilde{x}-\nu_j')\,a_j(\tilde{x}-\nu_j',y(t_k)) - \sum_{j \in
  \sU} \rho(t,\tilde{x})\,a_j(\tilde{x},y(t_k)) \right) = 0$, and hence
\begin{equation*}
\begin{split}
    \sum_{\tilde{x}}{\rho'}(t,\tilde{x}) &=  \sum_{\tilde{x}}\left(\sum_{j \in \sU} \rho(t,\tilde{x}-\nu_j')\,a_j(\tilde{x}-\nu_j',y(t_k)) - \sum_{j \in
  \sU} \rho(t,\tilde{x})\,a_j(\tilde{x},y(t_k)) - \rho(t,\tilde{x})\, a^{\sO}(\tilde{x},y(t_k))\right)\\
 &= -\sum_{\tilde{x}}\rho(t,\tilde{x})\, a^{\sO}(\tilde{x},y(t_k))
\end{split}
\end{equation*}
Hence $\tilde{\pi}$ satisfies 
\eqref{eq-pi-deriv}. 
%\begin{equation*}
%\begin{aligned}  
%\pi'(t,x) &= \sum_{j \in \sU} \pi(t,x-\nu_j')\,a_j(x-\nu_j',y(t_k)) - \sum_{j \in
%  \sU} \pi(t,x)\,a_j(x,y(t_k))\\
%&- \pi(t,x) \left(a^{\sO}(x,y(t_k)) -
%\sum_{\tilde{x}}a^{\sO}(\tilde{x},y(t_k)) \pi(t,\tilde{x}) \right) \quad \forall x \in \posint^{n_1}.
%\end{aligned}
%\end{equation*}

For $k=1,2,\dots$ at jump times $t_k$, $\rho(t,x)$ jumps according to 
\begin{equation*}
\rho(t_k,x) = \frac{1}{|\sO_k|}\sum_{j \in \sO_k} a_j(x-\nu_j',y(t_{k-1})) \, \rho(t_k-,x-\nu_j') \quad x \in \posint^{n_1}.      
\end{equation*}
Since $\tilde{\pi}(t,x)=\rho(t,x)/\sum_{\tilde{x}}{\rho}(t,\tilde{x})$, we have
\begin{equation*}
\tilde{\pi}(t_k,x) = \frac{\sum_{l \in \sO_k} a_l(x-\nu_l',y(t_{k-1}))\,
  \tilde{\pi}(t_k-,x-\nu_l')}{\sum_{\tilde{x}}\sum_{l \in \sO_k}
  a_l(\tilde{x},y(t_{k-1}))\, \tilde{\pi}(t_k-,\tilde{x})} \quad \forall x \in \posint^{n_1}
\end{equation*}
which shows that $\tilde{\pi}(t,x)$ satisfies  \eqref{eq-pi-jump} at jump times $t_k$.

\nocite{*}
\bibliography{StateParaEst}% Produces the bibliography via BibTeX.
\bibliographystyle{plain}

\end{document}